\newcommand{\B}{\mathbb{B}}
\newcommand{\N}{\mathbb{N}}
\newcommand{\C}{\mathbb{C}}
\newcommand{\BB}[1][n]{\B^{\B^{#1}}}
\newtheorem{thrm}{Theorem}[section]
\newtheorem{lemma}{Lemma}[section]
\newtheorem{corlry}{Corollary}[section]
\theoremstyle{definition}
\newtheorem{ex}{Example}[section]
\title{Tunable Quantum Neural Networks for Boolean Functions}
\author{
  Viet Pham Ngoc\\
    Imperial College London\\ London, United Kingdom\\
    \texttt{viet.pham-ngoc17@imperial.ac.uk}
    \And
    Herbert Wiklicky\\
    Imperial College London\\ London, United Kingdom\\
    \texttt{h.wiklicky@imperial.ac.uk}
}
\begin{document}
\maketitle

\begin{abstract}
In this paper we propose a new approach to quantum neural networks. Our multi-layer architecture does without the measurements usually emulating the non-linear activation functions that are characteristic of the classical neural networks. Despite this, our proposed architecture is still able to learn any Boolean function. This ability arises from the correspondence that exists between a Boolean function and a particular quantum circuit made of multi-controlled $\mathbf{X}$ gates. This correspondence is built via a polynomial representation of the function called the algebraic normal form. We use this construction to introduce the idea of a generic quantum circuit whose gates can be tuned to learn any Boolean functions. In order to perform the learning task, we have devised an algorithm that leverages the absence of measurements. When presented with a superposition of all the binary inputs of length $n$, the network can learn the target function in at most $n+1$ updates.
\end{abstract}

\section{Introduction}
In the past years, architectures based on artificial neural networks have been successfully applied to numerous fields. Nonetheless, these architectures often come with a significant complexity, particularly during the training phase, so much so that alleviating this complexity has become an active area of research. The main propositions stemming from this area are either alternative algorithms \cite{Li2017} or dedicated hardware to accelerate the computations \cite{Zhao2016}.

In the meantime, quantum versions of some other machine learning techniques, for supervised and unsupervised learning tasks alike, have been devised and have shown quantum speed up \cite{Lloyd2013}. These developments seem to indicate that a quantum version of artificial neural networks could exhibit similar speed up. To this end, several architectures have been proposed to implement quantum artificial neural networks \cite{Schuld2014}. These propositions mimic to different degrees the functioning of the classical versions while taking advantage of quantum properties such as superposition to exponentially reduce the dimension of the state space \cite{Rebentrost2018,Tacchino2019}. One persistent hurdle arising from this approach is the implementation of the non-linear activation function. Most of the proposed solutions to overcome this difficulty rely on the use of measurements to provide the non-linear behaviour at the cost of losing the quantum nature of the state. 

Here we propose an alternative architecture devoid of measurement, thus preserving the quantum information, but that is still able to learn any Boolean function. This construction relies on the correspondence, introduced by \cite{Younes2004} and most recently by \cite{Bogdanov2019}, between the algebraic normal form of a Boolean function and a quantum circuit made of multi-controlled $\mathbf{X}$ gates acting on a single ancillary qubit. We provide a proof that this correspondence is correct and unique by building a group isomorphism between the set of Boolean functions and the set of the multi-controlled $\mathbf{X}$ gates. This allows us to introduce a general quantum circuit whose gates can be tuned in order to exactly learn any Boolean functions.

This circuit can be likened to a neural network in the sense that each gate can only perform a simple operation on the input data as well as the output of the previous gate. The circuit taken as a whole is then able to compute complex Boolean functions. In order to train this particular network we have devised a learning algorithm. While \cite{Younes2003} have introduced an algorithm to automatically build circuits similar to the ones presented in this paper, ours does not need to have access to the truth table of the target function and takes an approach that is more in the fashion of the training algorithms used with classical neural networks. Furthermore it takes advantage of the lack of measurements to reduce the number of updates needed until convergence. We show that provided with a superposition of all the inputs of length $n$, our network can exactly learn the target function within $n+1$ updates. In order to validate our theoretical results, we have implemented our approach on the IBM quantum simulator for the cases $n=2$ and $n=3$.

\section{The Algebraic Normal Form of Boolean Functions}\label{sec:ANF}
Let $\B=\{0,1\}$ the set of the Boolean values and $\B^{\B^n}$ the set of functions from $\B^n$ to $\B$. \newline For $u=u_0 \ldots u_{n-1} \in \B^n$ we note:
\begin{equation}\label{eq:1u}
    1_u = \left \{i \in \{0, \ldots , n-1\} \ | \ u_i = 1\right \}
\end{equation}
And for $x=x_0 \ldots x_{n-1} \in \B^n$, $x^u$ is defined by 
$$
    x^u = \prod_{i \in 1_u}{x_i}
$$
Now for $u \in \B^n$, we introduce $m_u \colon \B^n \to \B$ with $m_u \colon x \mapsto x^u$. 

Let $f \in \B^{\B^n}$, then $f$ has a unique polynomial representation of the form:
\begin{equation}\label{eq:ANF}
    f = \bigoplus_{u \in \B^n}{c^f_u m_u}
\end{equation}
The notations introduced in (\ref{eq:ANF}) are as follow. The binary operator $\oplus$ represents the logical operator $\mathbf{XOR}$. The terms $m_u$ are called monomials and the coefficients $c^f_u \in \B$ indicate the presence or the absence of the corresponding monomial.    

This representation is called the algebraic normal form (ANF). We are interested in the ANF as it makes explicit the relation between the inputs and the outputs of a Boolean function by using two simple Boolean operations: the $\mathbf{XOR}$ and the $\mathbf{AND}$. While there exist other polynomial representations, this particular form allows for an easy translation of a Boolean function into a quantum circuit as will be shown below. In Table \ref{tab:ANF(f)} we have gathered the algebraic normal form of some well-known functions.
\begin{table}[h]
    \centering
    \begin{tabular}{|c||c|c|c|c|}
        \hline
        f      & $\mathbf{0}$ & $\mathbf{1}$ & $\mathbf{NOT}(x_0)$ & $\mathbf{OR}(x_0x_1)$           \\
        \hline
        ANF(f) & 0            & 1            & $1 \oplus x_0$      & $x_0 \oplus x_1 \oplus x_0.x_1$ \\
        \hline
    \end{tabular}
    \caption{ANF of some Boolean functions}
    \label{tab:ANF(f)}
\end{table}
\\
We now show that the set of Boolean functions possesses a group structure when fitted with the operator $\mathbf{XOR}$ and that a Boolean function has a unique algebraic normal form. 
\begin{lemma}\label{lem:B^B^n}
Let $f,g \in \B^{\B^n}$, we define the operation $f \oplus g$ as $f \oplus g \colon x \mapsto f(x) \oplus g(x)$, then:
\\
$\left(\B^{\B^n},\oplus\right)$ is a finite commutative group where the identity element is the constant function $\mathbf{0}$ and each element is its own inverse.
\end{lemma}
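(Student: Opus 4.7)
The plan is to verify the group axioms pointwise, lifting them from the known structure of $(\B,\oplus)$, which is itself a commutative group (namely $\mathbb{Z}/2\mathbb{Z}$) with identity $0$ and in which every element is self-inverse since $0\oplus 0 = 0$ and $1 \oplus 1 = 0$. Because the operation on $\B^{\B^n}$ is defined pointwise, every relevant identity for functions reduces to the corresponding identity on Boolean values evaluated at each $x \in \B^n$.

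Concretely, I would proceed in the following order. First, finiteness: since $\B^n$ has $2^n$ elements and $\B$ has two, the cardinality of $\B^{\B^n}$ is $2^{2^n}$, which is finite. Second, closure: for $f,g \in \B^{\B^n}$, the map $x \mapsto f(x) \oplus g(x)$ takes values in $\B$ and is defined on $\B^n$, hence lies in $\B^{\B^n}$. Third, associativity and commutativity: for every $x \in \B^n$, $(f \oplus g) \oplus h$ evaluated at $x$ equals $f(x)\oplus g(x)\oplus h(x) = f \oplus (g \oplus h)$ evaluated at $x$, and similarly $f(x)\oplus g(x) = g(x)\oplus f(x)$, so both properties descend from $\B$ to $\B^{\B^n}$. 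Fourth, identity: for the constant function $\mathbf{0}$ and any $f$, one has $(f \oplus \mathbf{0})(x) = f(x) \oplus 0 = f(x)$, so $f \oplus \mathbf{0} = f$. Fifth, self-inverse: $(f \oplus f)(x) = f(x) \oplus f(x) = 0 = \mathbf{0}(x)$ for every $x$, so $f \oplus f = \mathbf{0}$, showing that $f$ is its own inverse (and in particular inverses exist).

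There is no real obstacle in this argument; the only thing worth being careful about is to explicitly invoke that two functions in $\B^{\B^n}$ are equal if and only if they agree on every input, so each axiom verified pointwise on $\B$ genuinely transfers to a statement about functions. Once this is made explicit, the lemma follows directly from the corresponding facts about $(\B,\oplus)$.
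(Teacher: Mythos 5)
Your proof is correct and takes essentially the same approach as the paper, which simply asserts that the result ``stems from the properties of the operator $\mathbf{XOR}$''; you have merely spelled out the pointwise lifting of the group axioms from $(\B,\oplus)$ that the paper leaves implicit.
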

\begin{proof}
    These results stem from the properties of the operator $\mathbf{XOR}$.
\end{proof}
Let $\mathcal{M} = \{m_u \ | \ u \in \B^n\}$ be the set of the monomials over $\B^n$, then we have the following:  
\begin{thrm}\label{th:A}
    Let $\mathcal{A}$ be the subgroup of $\left(\B^{\B^n},\oplus\right)$ generated by $\mathcal{M}$, then:
    $$
        \mathcal{A} = \B^{\B^n}
    $$
\end{thrm}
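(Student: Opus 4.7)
My approach is to reduce the statement to a linear independence claim and then dispatch that claim with a minimal-element evaluation. First, I would observe that, by Lemma \ref{lem:B^B^n}, $(\B^{\B^n},\oplus)$ is in fact a $\B$-vector space of cardinality $2^{2^n}$, inside which $\mathcal{A}$ is exactly the subspace spanned by $\mathcal{M}$. Since every element of $\mathcal{A}$ is an XOR-sum over some subset of $\mathcal{M}$ and $|\mathcal{M}|=2^n$, one has $|\mathcal{A}|\le 2^{2^n}=|\B^{\B^n}|$. The inclusion $\mathcal{A}\subseteq\B^{\B^n}$ is therefore an equality if and only if the $2^{|\mathcal{M}|}$ XOR-combinations of monomials are all distinct, i.e.\ iff $\mathcal{M}$ is a linearly independent family over $\B$.

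To prove linear independence I would argue by contradiction. Endow $\B^n$ with the partial order $v\preceq u \iff 1_v\subseteq 1_u$, and suppose there is a nonempty $S\subseteq\B^n$ with $\bigoplus_{u\in S} m_u=\mathbf{0}$. Pick a $\preceq$-minimal element $u^\ast$ of $S$ and evaluate the identity at the point $x=u^\ast$. By the definition of the monomials, $m_u(u^\ast)=\prod_{i\in 1_u}u^\ast_i=1$ exactly when $1_u\subseteq 1_{u^\ast}$, i.e.\ when $u\preceq u^\ast$. The minimality of $u^\ast$ in $S$ means that the only such index is $u=u^\ast$ itself, so the left-hand side collapses to $m_{u^\ast}(u^\ast)=1$, contradicting equality with $\mathbf{0}$. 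Hence $S=\emptyset$, $\mathcal{M}$ is linearly independent, and together with the cardinality count this gives $\mathcal{A}=\B^{\B^n}$.

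The whole plan is essentially a counting/span argument; the only non-routine step is spotting that evaluating at a $\preceq$-minimal element of $S$ isolates a single surviving monomial. An alternative route, perhaps closer to the spirit of the paper, would be to exhibit explicit coefficients through the M\"obius-style formula $c_u^f=\bigoplus_{v\preceq u}f(v)$ and check directly that $f=\bigoplus_u c_u^f m_u$; the verification reduces to showing that $|\{u:v\preceq u\preceq x\}|=2^{|1_x\setminus 1_v|}$ is odd exactly when $v=x$. That construction also proves existence in $\mathcal{A}=\B^{\B^n}$, but uniqueness of the ANF still needs the cardinality comparison, so I prefer the more economical linear-independence route above.
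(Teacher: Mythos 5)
Your proof is correct and is essentially the paper's own argument: both reduce the claim to the linear independence of $\mathcal{M}$ via a cardinality count, and both establish independence by evaluating a hypothetical vanishing XOR-combination at a minimal element of its support so that a single monomial survives. The only cosmetic difference is that you pick a $\preceq$-minimal element of $S$ while the paper picks one of minimal Hamming weight; these choices play the identical role.
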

\begin{proof}
    The proof can be found in Appendix \ref{ap:proof}
\end{proof}
Theorem \ref{th:A} shows the existence as well as the uniqueness of the algebraic normal form of a Boolean function. It is therefore equivalent to consider a Boolean functions or its ANF. 

For this equivalence to be complete, we now present a way to construct the ANF of a function called the Method of Indeterminate Coefficients \cite{Kurgalin2018}. For the sake of clarity we use a simple example but the generalization is straight-forward. 
\begin{ex}
    Let $f \in \B^{\B^2}$ defined by Table \ref{tab:f_ex}.
    \begin{table}[h]
        \centering
        \begin{tabular}{|c||c|c|c|c|}
            \hline
            x & 00 & 01 & 10 & 11 \\
            \hline
            f(x) & 1 & 0 & 1 & 1 \\
            \hline
        \end{tabular}
        \caption{Truth table of $f$}
        \label{tab:f_ex}
    \end{table}
    \\
    These values can be gathered in the form of a vector $v_f = (1,0,1,1)^t$ and we are looking for the vector $c=\left(c_{00}^f,c_{01}^f,c_{10}^f,c_{11}^f\right)^t$ verifying $\mathbf{T}c=v_f$ where $\mathbf{T}$ is the matrix such that $\mathbf{T}_{x,u}=m_u(x)$. In our case:
    $$
        \mathbf{T}=\begin{pmatrix}
            1 & 0 & 0 & 0 \\
            1 & 1 & 0 & 0 \\
            1 & 0 & 1 & 0 \\
            1 & 1 & 1 & 1
        \end{pmatrix}
    $$
    It can be shown that $\mathbf{T} = \mathbf{T}^{-1}$ in $\B^{2 \times 2}$, hence $c = \mathbf{T}v_f$ and $c=(1,1,0,1)^t$. This leads to $f = m_{00} \oplus m_{01} \oplus m_{11}$ or $f(x_0x_1)= 1 \oplus x_1 \oplus x_0.x_1$
\end{ex}

\section{From Algebraic Normal Form to Quantum Circuit}\label{sec:A->Aq}
Using the ANF of a Boolean function, we show that we can easily design a quantum circuit expressing this function. We consider a quantum circuit operating on $n+1$ qubits $\ket{x_0 \ldots x_{n-1}}\ket{q_r}$ where the last qubit is the read-out qubit and we are interested in a particular set of gates. Let $u \in \B^n$ and define $\mathbf{C}_u$ as the multi-controlled $\mathbf{X}$ gate that is acting on the read-out qubit and controlled by the qubits $\{\ket{x_i} \ | \ i \in 1_u\}$ where $1_u$ is defined as in (\ref{eq:1u}). Figure \ref{fig:C(2)} represents the 4 possible gates for $n=2$.
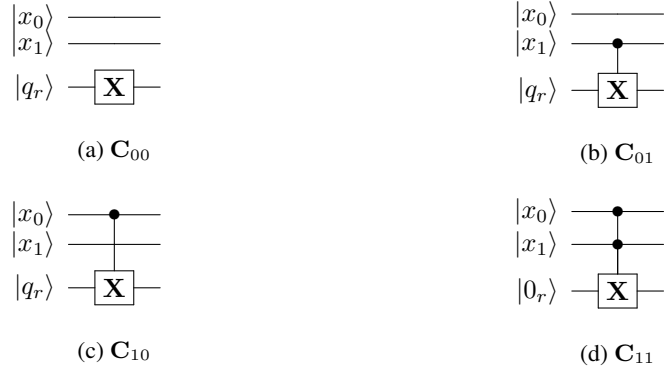
\begin{figure}[h]
    \centering
    \begin{subfigure}{0.4\textwidth}
        \centering
        $$
        \Qcircuit @R=1em @C=1em{
            \lstick{\ket{x_0}} & \qw      & \qw \\
            \lstick{\ket{x_1}} & \qw      & \qw \\
            \lstick{\ket{q_r}} & \gate{\mathbf{X}} & \qw
        }
        $$
        \caption{$\mathbf{C}_{00}$}
        \label{fig:C_00}
    \end{subfigure}
    \begin{subfigure}{0.4\textwidth}
        \centering
        $$
        \Qcircuit @R=1em @C=1em{
            \lstick{\ket{x_0}} & \qw      & \qw \\
            \lstick{\ket{x_1}} & \ctrl{1} & \qw \\
            \lstick{\ket{q_r}} & \gate{\mathbf{X}} & \qw
        }
        $$
        \caption{$\mathbf{C}_{01}$}
        \label{fig:C_{01}}
    \end{subfigure}
    \begin{subfigure}{0.4\textwidth}
        \centering
        $$
        \Qcircuit @R=1em @C=1em{
            \lstick{\ket{x_0}} & \ctrl{2} & \qw \\
            \lstick{\ket{x_1}} & \qw      & \qw \\
            \lstick{\ket{q_r}} & \gate{\mathbf{X}} & \qw
        }
        $$
        \caption{$\mathbf{C}_{10}$}
        \label{fig:C_{10}}
    \end{subfigure}
    \begin{subfigure}{0.4\textwidth}
        \centering
        $$
        \Qcircuit @R=1em @C=1em{
            \lstick{\ket{x_0}} & \ctrl{2} & \qw \\
            \lstick{\ket{x_1}} & \ctrl{1} & \qw \\
            \lstick{\ket{0_r}} & \gate{\mathbf{X}} & \qw
        }
        $$
        \caption{$\mathbf{C}_{11}$}
        \label{fig:C_{11}}
    \end{subfigure}
    \caption{The set of the possible controlled $\mathbf{X}$ gates for $n=2$}
    \label{fig:C(2)}
\end{figure}
\\
We now present some intermediary results that exhibit the correspondence existing between the algebraic normal form of a Boolean function and the quantum circuit able to compute this function over the ancillary qubit. 
\begin{lemma}\label{lem:C_u}
Let $u \in \B^n$, then:
$$
    \forall x \in \B^{n}, \forall q_r \in \B, \mathbf{C}_u\ket{x}\ket{q_r} = \ket{x}\ket{q_r \oplus m_u(x)}
$$
\end{lemma}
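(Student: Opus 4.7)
The claim is essentially that the multi-controlled gate $\mathbf{C}_u$ fires exactly when the monomial $m_u$ evaluates to $1$ on the input $x$, so I would prove it by a direct case analysis on the value of $m_u(x) \in \B$, using only the definitions of $\mathbf{C}_u$ and of $m_u$.

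First I would unpack the relevant definitions: by construction $m_u(x) = \prod_{i \in 1_u} x_i$, which in $\B$ equals $1$ if and only if $x_i = 1$ for every $i \in 1_u$, and equals $0$ otherwise; and by construction $\mathbf{C}_u$ acts on $\ket{x}\ket{q_r}$ as $\mathbf{X}$ on the read-out register exactly when all the control qubits indexed by $1_u$ are in state $\ket{1}$, and acts as the identity otherwise. Then I would treat the two cases separately. In the case $m_u(x)=1$, all controls are active so $\mathbf{C}_u$ flips $q_r$, giving $\ket{x}\ket{q_r \oplus 1} = \ket{x}\ket{q_r \oplus m_u(x)}$. In the case $m_u(x)=0$, at least one control qubit is $\ket{0}$, so $\mathbf{C}_u$ leaves the read-out qubit unchanged, giving $\ket{x}\ket{q_r} = \ket{x}\ket{q_r \oplus 0} = \ket{x}\ket{q_r \oplus m_u(x)}$.

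The only mild subtlety worth flagging is the edge case $u = 0\cdots 0$, where $1_u = \emptyset$. Here the empty product convention gives $m_u(x) = 1$ for every $x$, and the corresponding $\mathbf{C}_u$ is a bare $\mathbf{X}$ gate with no controls (as in Figure \ref{fig:C_00}), which unconditionally flips $\ket{q_r}$; so the two interpretations agree and the above case analysis still covers this situation without modification.

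I do not expect a real obstacle here: the statement is essentially a restatement of the semantics of a multi-controlled $\mathbf{X}$ gate in the language of monomials, and the proof reduces to matching the firing condition of $\mathbf{C}_u$ with the definition of $m_u(x)$. The lemma is important chiefly as the atomic building block that, together with the ANF from Theorem \ref{th:A}, will let subsequent results compose gates $\mathbf{C}_u$ to realise an arbitrary Boolean function on the ancillary qubit.
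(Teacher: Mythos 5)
Your proof is correct and follows essentially the same route as the paper's: both arguments simply match the firing condition of the multi-controlled $\mathbf{X}$ gate (all control qubits in state $\ket{1}$) with the definition of the monomial $m_u(x)=\prod_{i\in 1_u}x_i$. Your explicit case split on $m_u(x)$ and the remark about the uncontrolled case $u=0\cdots0$ only spell out details the paper leaves implicit.
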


\begin{proof}
The proof can be found in Appendix \ref{ap:proof}
\end{proof}

\begin{lemma}\label{lem:A_Q}
Let $\mathcal{C} = \left\{\mathbf{C}_u \ | \ u \in \B^n\right\}$ and $\mathcal{A}_Q$ the subgroup of the unitary group $U(n+1)$ that is generated by $\mathcal{C}$, then:
\\
$\mathcal{A}_Q$ is a finite commutative group whose identity element is the Identity matrix and where each element is its own inverse. 
\end{lemma}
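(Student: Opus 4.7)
The plan is to establish the group properties by showing first that the generators themselves satisfy the desired relations, and then argue that these relations propagate to the whole generated subgroup.

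First I would use Lemma \ref{lem:C_u} directly to check the two structural facts about the generators. For any $u \in \B^n$ and any basis state $\ket{x}\ket{q_r}$, applying $\mathbf{C}_u$ twice yields $\ket{x}\ket{q_r \oplus m_u(x) \oplus m_u(x)} = \ket{x}\ket{q_r}$ by Lemma \ref{lem:B^B^n} (each Boolean value is its own XOR-inverse), hence $\mathbf{C}_u^2 = \mathbf{I}$ on the computational basis, and by linearity on all of $U(n+1)$. For commutativity of two generators $\mathbf{C}_u$ and $\mathbf{C}_v$, I would again apply Lemma \ref{lem:C_u} twice in each order and compare: both $\mathbf{C}_u\mathbf{C}_v$ and $\mathbf{C}_v\mathbf{C}_u$ send $\ket{x}\ket{q_r}$ to $\ket{x}\ket{q_r \oplus m_u(x) \oplus m_v(x)}$, since $\oplus$ is commutative.

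Next I would lift these two properties from the generating set $\mathcal{C}$ to the generated subgroup $\mathcal{A}_Q$. Because every pair of generators commutes, any word in the generators can be reordered freely; because each generator is its own inverse, any repeated factor in such a word can be collapsed, and inverses can be replaced by the generator itself. Consequently every element of $\mathcal{A}_Q$ can be written as a product of a subset of $\mathcal{C}$, giving at most $2^{|\mathcal{C}|} = 2^{2^n}$ elements, so $\mathcal{A}_Q$ is finite. Commutativity of the whole group follows immediately from commutativity of the generators, and self-inverseness of an arbitrary element $\prod_{u \in S} \mathbf{C}_u$ follows from $\left(\prod_{u \in S}\mathbf{C}_u\right)^2 = \prod_{u \in S} \mathbf{C}_u^2 = \mathbf{I}$, using again that the factors commute. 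The identity of this subgroup is the empty product, which is the identity matrix $\mathbf{I}$ of $U(n+1)$.

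I do not expect a serious obstacle here: the whole argument is a mechanical transfer of the two pointwise properties (XOR being commutative and involutive) through Lemma \ref{lem:C_u}. The only point to state carefully is that an identity of unitaries that holds on every computational-basis vector holds as an operator identity by linearity, so that the group-theoretic equalities $\mathbf{C}_u^2 = \mathbf{I}$ and $\mathbf{C}_u\mathbf{C}_v = \mathbf{C}_v\mathbf{C}_u$ are literal matrix equalities in $U(n+1)$ rather than merely equalities of their actions on basis states.
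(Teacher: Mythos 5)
Your proof is correct and is essentially the argument the paper intends: the involutivity and pairwise commutativity of the generators follow from Lemma \ref{lem:C_u} together with the corresponding properties of $\oplus$ (an operator identity verified on every computational basis state holds as a matrix identity), and these relations reduce every element of $\mathcal{A}_Q$ to a product $\prod_{u \in S}\mathbf{C}_u$ over a subset $S \subseteq \B^n$, from which finiteness, commutativity, self-inverseness and the identification of the identity element all follow. Note that the paper defers this proof to Appendix \ref{ap:proof} but the appendix in fact omits it; the normal form $\prod_{u}\mathbf{C}_u^{\alpha_u}$ you derive is precisely what the paper's proof of Theorem \ref{th:A->A_Q} later takes from this lemma, so your reconstruction matches the intended argument.
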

\begin{proof}
    The proof can be found in Appendix \ref{ap:proof}
\end{proof}
We now introduce the group morphism $\Phi \colon \mathcal{A} \to \mathcal{A}_Q$ such that:
$$
    \forall u \in \B^n, \Phi(m_u) = \mathbf{C}_{u}
$$
This morphism leads to the following Lemma:
\begin{lemma}\label{lem:phi(f)}
Let $f \in \B^{\B^n}$ then:
$$
    \forall x \in \B^{n}, \forall q_r \in \B, \ \Phi(f)\ket{x}\ket{q_r} = \ket{x}\ket{q_r \oplus f(x)}
$$
\end{lemma}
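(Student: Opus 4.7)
The plan is to leverage Theorem \ref{th:A}, which guarantees that every $f \in \B^{\B^n}$ admits a (unique) algebraic normal form $f = \bigoplus_{u \in \B^n} c_u^f\, m_u$, and then transport this decomposition through the morphism $\Phi$ into a product of $\mathbf{C}_u$ gates whose action on $\ket{x}\ket{q_r}$ is already known from Lemma \ref{lem:C_u}.

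First I would write $f = \bigoplus_{u \in S} m_u$ where $S = \{u \in \B^n \mid c_u^f = 1\}$, so that the only monomials appearing are those actually contributing to $f$. Since $\Phi$ is a group morphism from $(\mathcal{A},\oplus)$ to $(\mathcal{A}_Q,\cdot)$, this immediately yields
$$
\Phi(f) = \prod_{u \in S} \Phi(m_u) = \prod_{u \in S} \mathbf{C}_u,
$$
where the product is well-defined (in any order) because $\mathcal{A}_Q$ is commutative by Lemma \ref{lem:A_Q}. If $S = \emptyset$ then $f = \mathbf{0}$ and $\Phi(f)$ is the identity, giving the claim trivially since $q_r \oplus f(x) = q_r \oplus 0 = q_r$.

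Next I would apply $\Phi(f)$ to $\ket{x}\ket{q_r}$ gate by gate. Enumerating $S = \{u^{(1)}, \ldots, u^{(k)}\}$ and using Lemma \ref{lem:C_u} repeatedly, a straightforward induction on $j = 1, \ldots, k$ gives
$$
\mathbf{C}_{u^{(j)}} \cdots \mathbf{C}_{u^{(1)}} \ket{x}\ket{q_r} = \ket{x}\bigl|q_r \oplus m_{u^{(1)}}(x) \oplus \cdots \oplus m_{u^{(j)}}(x)\bigr\rangle,
$$
because each $\mathbf{C}_{u^{(j)}}$ leaves the first $n$ qubits untouched and XORs $m_{u^{(j)}}(x)$ into the read-out qubit. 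Setting $j = k$ and recognising that $\bigoplus_{u \in S} m_u(x) = f(x)$ by definition of the ANF yields the desired equality.

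The argument is essentially routine once Theorem \ref{th:A} and Lemmas \ref{lem:C_u}, \ref{lem:A_Q} are in hand; the only mild subtlety is justifying the passage from $\Phi(f) = \prod_{u \in S} \mathbf{C}_u$ to a well-defined ordered product of operators, which is precisely what commutativity of $\mathcal{A}_Q$ provides, and handling the empty-product/constant-zero case separately. No genuine obstacle arises.
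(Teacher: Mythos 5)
Your proof is correct and follows the same route as the paper's (one-line) argument: expand $f$ into its ANF, push the decomposition through the morphism $\Phi$, and apply Lemma \ref{lem:C_u} gate by gate to accumulate the XORs on the read-out qubit. You simply spell out the induction and the empty-sum case that the paper leaves implicit.
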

\begin{proof}
    It suffices to take the ANF of $f$ and apply $\Phi$. Lemma \ref{lem:C_u} then allows us to conclude.
\end{proof}
Thanks to these lemmas, we can now prove that $\mathcal{A}$ and $\mathcal{A}_Q$ are equivalent.
\begin{thrm}\label{th:A->A_Q}
Let $\Phi$ be the group morphism such that $\Phi(m_u)=\mathbf{C}_u$. Then:
\\
$\Phi$ is an isomorphism from $\mathcal{A}$ to $\mathcal{A}_Q$.
\end{thrm}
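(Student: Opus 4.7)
The plan is to establish that $\Phi$ is (i) a well-defined surjective group homomorphism and (ii) has trivial kernel, which together with finiteness of both groups suffices for an isomorphism. Since $\Phi$ is introduced as a morphism of abelian groups defined on the generating set $\mathcal{M}$, and since Theorem \ref{th:A} tells us every element $f \in \mathcal{A}$ admits the ANF expansion $f = \bigoplus_{u \in \B^n} c^f_u m_u$, we have a canonical expression $\Phi(f) = \prod_{u : c^f_u = 1} \mathbf{C}_u$ (the product order being immaterial because $\mathcal{A}_Q$ is commutative by Lemma \ref{lem:A_Q}).

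Surjectivity would follow immediately: every element of $\mathcal{A}_Q$ is, by definition, a finite product of generators $\mathbf{C}_u \in \mathcal{C}$, each of which is the image under $\Phi$ of the corresponding monomial $m_u$. Combining with the homomorphism property gives that any such product lies in the image of $\Phi$.

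For injectivity I would compute the kernel directly using Lemma \ref{lem:phi(f)}. Suppose $f \in \mathcal{A}$ satisfies $\Phi(f) = \mathbf{I}$, the identity of $\mathcal{A}_Q$. Then for every $x \in \B^n$ and every $q_r \in \B$ we have
\[
\ket{x}\ket{q_r \oplus f(x)} \;=\; \Phi(f)\ket{x}\ket{q_r} \;=\; \ket{x}\ket{q_r},
\]
which forces $f(x) = 0$ for all $x$, i.e.\ $f = \mathbf{0}$, the identity of $\mathcal{A}$ by Lemma \ref{lem:B^B^n}. Hence $\ker \Phi = \{\mathbf{0}\}$ and $\Phi$ is injective, completing the argument that $\Phi$ is a bijective homomorphism, i.e.\ an isomorphism.

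The argument is almost entirely bookkeeping once Lemmas \ref{lem:C_u}, \ref{lem:A_Q} and \ref{lem:phi(f)} are in hand; the only mild subtlety is making sure that the two group structures are compatible enough that $\Phi$ is genuinely well defined, as opposed to merely defined on generators. The cleanest way to see this is to observe that both $\mathcal{A}$ and $\mathcal{A}_Q$ are elementary abelian $2$-groups generated by $2^n$ generators indexed by $\B^n$, and to argue (using uniqueness of ANF from Theorem \ref{th:A} on the domain side, and the reversibility of quantum gates on the codomain side) that the only relations among the respective generators are the trivial ones $m_u \oplus m_u = \mathbf{0}$ and $\mathbf{C}_u \mathbf{C}_u = \mathbf{I}$; this ensures $\Phi$ extends consistently from $\mathcal{M}$ to all of $\mathcal{A}$. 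This last compatibility check is the only place where anything non-automatic is happening, and it is the step I would expect to require the most care to write out rigorously.
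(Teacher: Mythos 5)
Your proof is correct and takes essentially the same route as the paper: injectivity is derived from Lemma \ref{lem:phi(f)} (you compute the kernel where the paper shows directly that $\Phi(f)=\Phi(g)$ forces $f=g$, an equivalent formulation for group homomorphisms), and surjectivity from the fact that every element of $\mathcal{A}_Q$ is a product of the generators $\mathbf{C}_u=\Phi(m_u)$. Your closing concern about well-definedness of $\Phi$ from its values on generators is legitimate and is passed over in silence by the paper; the resolution is the one you sketch on the domain side, namely that by Theorem \ref{th:A} the $m_u$ generate $\mathcal{A}$ freely as an elementary abelian $2$-group, so any assignment of the generators into a commutative group of exponent $2$ extends to a homomorphism (no relation-freeness of the codomain is actually needed for this step).
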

\begin{proof}
The proof can be found in Appendix \ref{ap:proof}
\end{proof}

Theorem \ref{th:A->A_Q} ensures that, given a Boolean function, we can design a quantum circuit able to express this function. To do so, it suffices to transform each monomial of the ANF into its corresponding gate. The circuit is then built by multiplying these gates. The commutativity of $\mathcal{A}_Q$ ensures that a change in the order of the gates leaves the resulting circuit invariant. More over, it shows that there is a reciprocal correspondence between a Boolean function and the quantum circuit computing this function. 

\begin{ex}\label{ex:f_example}
We illustrate the process of constructing such a quantum circuit with the function $f$  whose values are provided in Table \ref{tab:f}. 
\begin{table}[h]
    \centering
    \begin{tabular}{|c|c|c|c|}
        \hline
        $x$ & $f(x)$ & $x$ & $f(x)$ \\
        \hline
        000 & 0 & 100 & 1 \\
        \hline
        001 & 0 & 101 & 0 \\
        \hline
        010 & 1 & 110 & 0 \\
        \hline
        011 & 0 & 111 & 1 \\
        \hline
    \end{tabular}
    \caption{Values of $f$}
    \label{tab:f}
\end{table}
\\
Applying the Method of Indeterminate Coefficients yields the algebraic normal form of $f$:
$$
    f(x_0x_1x_2) = x_1 \oplus x_0 \oplus x_1.x_2 \oplus x_0.x_2 \oplus x_0.x_1.x_2
$$
Now applying the isomorphism $\Phi$ to $f$, we get :
\begin{align*}
    \Phi(f) &= \mathbf{C}_{010}\mathbf{C}_{100}\mathbf{C}_{011}\mathbf{C}_{101}\mathbf{C}_{111} \\ 
            &= \mathbf{C}_{111}\mathbf{C}_{101}\mathbf{C}_{100}\mathbf{C}_{011}\mathbf{C}_{010}
\end{align*}
The resulting circuit is depicted in Figure \ref{fig:C_f}.
\begin{figure}[h]
    \centering
    $$
    \Qcircuit @R=1em @C=1em{
        \lstick{\ket{x_0}} & \qw      & \qw      & \ctrl{3} & \ctrl{3} & \ctrl{3} & \qw \\
        \lstick{\ket{x_1}} & \ctrl{2} & \ctrl{2} & \qw      & \qw      & \ctrl{2} & \qw \\
        \lstick{\ket{x_2}} & \qw      & \ctrl{1} & \qw      & \ctrl{1} & \ctrl{1} & \qw \\
        \lstick{\ket{q_r}} & \gate{\mathbf{X}} & \gate{\mathbf{X}} & \gate{\mathbf{X}} & \gate{\mathbf{X}} & \gate{\mathbf{X}} & \qw 
    }
    $$
    \caption{Quantum circuit representing $\Phi(f)$} 
    \label{fig:C_f}
\end{figure}
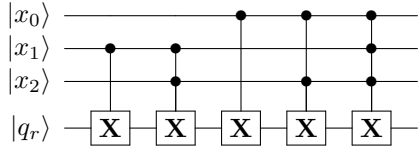
\\
We can check that we do have $\Phi(f)\ket{x}\ket{0}=\ket{x}\ket{f(x)}$ for $x \in \B^3$. For example: 
$$
    \Phi(f)\ket{010}\ket{0}=\ket{010}\ket{1}=\ket{010}\ket{f(010)}
$$ 
And 
$$
    \Phi(f)\ket{111}\ket{0}=\ket{111}\ket{1}=\ket{111}\ket{f(111)}
$$ 
So the circuit $\Phi(f)$ is able to compute $f$ which is what we intended to achieve.
\end{ex}
These preliminary results lead to the concept of a generic quantum circuit. This circuit is made of gates whose action on the read-out qubit can be tuned so that the resulting circuit is able to express any given Boolean function.

\section{Tunable Quantum Neural Network}
As previously, we work with a quantum circuit operating on n+1 qubits $\ket{x_o \ldots x_{n-1}}\ket{q_r}$. We recall that for $u \in \B^n$, $\mathbf{C}_u$ is the multi-controlled $X$ gate, controlled by the qubits $\{\ket{x_i} \ | \ i \in 1_u\}$ and $1_u$ is defined as in (\ref{eq:1u}). We introduce $\mathbf{G}_u$ the tunable quantum gate whose value can either be $\mathbf{I}$, the identity gate, or $\mathbf{C}_u$, that is $\mathbf{G}_u \in \{\mathbf{I}, \mathbf{C}_u\}$. $\mathbf{G}_u$ can be seen as the quantum version of the neuron as it performs simple local computations and can be parameterised. Let $\mathbf{TNN}$ be the tunable quantum neural network, $\mathbf{TNN}$ is defined by:
$$
    \mathbf{TNN} = \prod_{u \in \B^n}{\mathbf{G}_u}
$$
We have shown with Lemma \ref{lem:A_Q} that the gates $\mathbf{C}_u$ commute together and $\mathbf{I}$ commuting with all matrices, it comes that the order in which the product is done does not change the overall circuit. Tunable neural networks for the cases $n=2$ and $n=3$ are pictured in Figure \ref{fig:TunableQC}. 

For $n \in \N$ such a circuit contains $2^n$ gates. Each gate having two possible values, there exists in total $2^{2^n}$ different circuits, meaning that the set of all the tuned circuit is $\mathcal{A}_Q$ and thus any Boolean function can be expressed by a properly tuned network.
\begin{figure}[h]
    \begin{subfigure}{\textwidth}
        \centering
        $$
        \Qcircuit @R=1em @C=1em{
            \lstick{\ket{x_0}} & \qw        & \qw        & \ctrl{2}   & \ctrl{2}      & \qw \\
            \lstick{\ket{x_1}} & \qw        & \ctrl{1}   & \qw        & \ctrl{1}      & \qw \\
            \lstick{\ket{q_r}} & \gate{\mathbf{G}_{00}} & \gate{\mathbf{G}_{01}} & \gate{\mathbf{G}_{10}} & \gate{\mathbf{G}_{11}}  & \qw 
        }
        $$
        \caption{Tunable neural network with 2 qubits input}
        \label{fig:TunableQC2}
    \end{subfigure}
    \begin{subfigure}{\textwidth}
    \tiny
    \centering
        $$
        \Qcircuit @R=1em @C=0.5em{
            \lstick{\ket{x_0}} & \qw        & \qw        & \qw        & \qw        & \ctrl{3}     & \ctrl{3}     & \ctrl{3}     & \ctrl{3}     & \qw    \\
            \lstick{\ket{x_1}} & \qw        & \qw        & \ctrl{2}   & \ctrl{2}   & \qw          & \qw          & \ctrl{2}     & \ctrl{2}     & \qw    \\
            \lstick{\ket{x_2}} & \qw        & \ctrl{1}   & \qw        & \ctrl{1}   & \qw          & \ctrl{1}     & \qw          & \ctrl{1}     & \qw    \\
            \lstick{\ket{q_r}} & \gate{\mathbf{G}_{000}} & \gate{\mathbf{G}_{001}} & \gate{\mathbf{G}_{010}} & \gate{\mathbf{G}_{011}} & \gate{\mathbf{G}_{100}} & \gate{\mathbf{G}_{101}} & \gate{\mathbf{G}_{110}} & \gate{\mathbf{G}_{111}} & \qw 
        }
        $$
        \caption{Tunable neural network with 3 qubits input}
        \label{fig:TunableQC3}
    \end{subfigure}
    \caption{Tunable neural networks for different input size}
    \label{fig:TunableQC}
\end{figure}
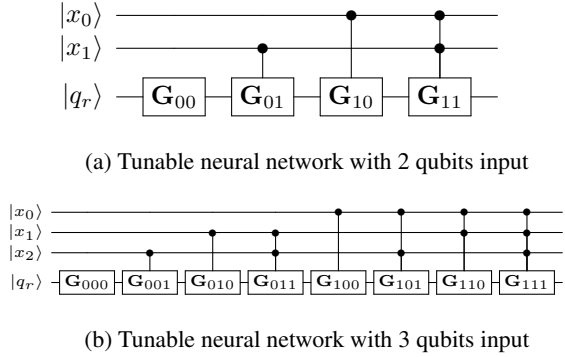

\section{Learning Algorithm}\label{sec:alg}
Given such a circuit and a Boolean function, $f \in \B^{\B^n}$, we introduce a learning algorithm resulting in a correctly tuned neural network that is able to express $f$. We first outline a general version of the algorithm and then discuss the details further. This algorithm intends to use quantum superposition in order to reduce the number of updates performed during the learning phase. Let $\ket{\Psi}$ be such superposition:
$$
    \ket{\Psi} = \sum_{x \in \B^n}{a_x\ket{x}\ket{0}}
$$
Where the last qubit is the read-out qubit. 

Suppose further that given a superposition, we are able to identify some of the states present in this superposition. More precisely, given a network $\mathbf{TNN}$ and $\ket{\Psi'} = \mathbf{TNN}\ket{\Psi}$, we are able to identify all the states of the form $\ket{x}\ket{1 \oplus f(x)}$ in $\ket{\Psi'}$. Let us call this operation $qt(f)\left(\ket{\Psi'}\right)$, then 
\begin{align*}
    &qt(f)\left(\ket{\Psi'}\right) = \\
    &\{x \in \B^n \ | \ a_x \neq 0, \ \mathbf{TNN}\ket{x}\ket{0} = \ket{x}\ket{1 \oplus f(x)}\}
\end{align*}
The goal of the operator $qt$ is thus to identify the inputs for which the output by the circuit is different from that of the target function. 

We start with $\mathbf{TNN}^{(0)}$ the circuit where all the gates are initialised with $\mathbf{I}$. For $k \in \N$, $\mathbf{TNN}^{(k)}$ is the circuit resulting from the $k$-th update and we denote by $E^{(k)} = qt(f)\left(\mathbf{TNN}^{(k)}\ket{\Psi}\right)$ the set of the inputs for which the output by $\mathbf{TNN}^{(k)}$ is erroneous. For example $E^{(0)} = f^{-1}(\{1\})$. Given these notations, we define the update rule as follow: 
\begin{itemize}
    \item Determine $E^{(k)}$
    \item For $u \in E^{(k)}$ switch the value of the corresponding gate $\mathbf{G}_u$, resulting in $\mathbf{TNN}^{(k+1)}$
\end{itemize}
This last step can be realized by multiplying $\mathbf{G}_u$ by $\mathbf{C}_u$. Indeed, if $\mathbf{G}_u = \mathbf{I}$ then $\mathbf{G}_u\mathbf{C}_u$ = $\mathbf{C}_u$, else we have $\mathbf{G}_u = \mathbf{C}_u$ and $\mathbf{G}_u\mathbf{C}_u$ = $\mathbf{I}$
The algorithm terminates when we reach an updated circuit $\mathbf{TNN}^{(h)}$ such that $E^{(h)} = \varnothing$. The learning algorithm can be summarised as shown in the Algorithm \ref{alg:tuning}
\begin{algorithm}[h]
    \SetAlgoLined
    $\mathbf{TNN} \leftarrow \mathbf{I}$\;
    $E \leftarrow qt(f)(\ket{\Psi})$\;
    \While{$E \neq \varnothing$}{
        \For{$u \in E$}{
            $\mathbf{G}_u \leftarrow \mathbf{G}_u\mathbf{C}_u$\;
        }
        $\mathbf{TNN} \leftarrow \prod_{u \in \B^n}{\mathbf{G}_u}$\;
        $\ket{\Psi '} \leftarrow \mathbf{TNN}\ket{\Psi}$\;
        $E \leftarrow qt(f)\left(\ket{\Psi'}\right)$\;
    }
    \caption{Learning algorithm}
    \label{alg:tuning}
\end{algorithm}

\subsection{Example}
Let us run this algorithm on an example. We want to tune the circuit in order for it to express the function $f$ introduced in Example \ref{ex:f_example}, the values of which are gathered in Table \ref{tab:f}. Let $\ket{\Psi}$ be the superposition we are working with:
$$
    \ket{\Psi} = \sum_{x \in \B^3}{a_x\ket{x}\ket{0}}
$$
\\
Where $a_x \neq 0$ for $x \in \B^3$. We start with the circuit $\mathbf{TNN}^{(0)} = \mathbf{I}$ and 
\begin{align*}
    \ket{\Psi^{(0 )}} &= \mathbf{TNN}^{(0)}\ket{\Psi} \\
                      &= \sum_{\substack{x \in \B^3\\f(x)=0}}{a_x\ket{x}\ket{0}}+\sum_{\substack{x \in \B^3\\f(x)=1}}{a_x\ket{x}\ket{0}}
\end{align*}
Performing $qt(f)(\ket{\Psi^{(0)}})$ then yields: 
$$
    E^{(0)}=\{010,100,111\}
$$ 
We thus have to switch the value of the corresponding gates: $\mathbf{G}_{010}, \ \mathbf{G}_{100}$ and $\mathbf{G}_{111}$ resulting in $\mathbf{TNN}^{(1)}$ as depicted in Figure \ref{fig:TC1}.
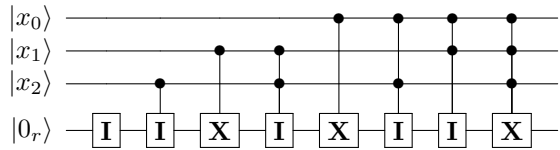
\begin{figure}[h]
    \centering
    $$
    \Qcircuit @R=1em @C=1em{
        \lstick{\ket{x_0}} & \qw        & \qw        & \qw        & \qw        & \ctrl{3}     & \ctrl{3}     & \ctrl{3}     & \ctrl{3}     & \qw \\
        \lstick{\ket{x_1}} & \qw        & \qw        & \ctrl{2}   & \ctrl{2}   & \qw          & \qw          & \ctrl{2}     & \ctrl{2}     & \qw \\
        \lstick{\ket{x_2}} & \qw        & \ctrl{1}   & \qw        & \ctrl{1}   & \qw          & \ctrl{1}     & \qw          & \ctrl{1}     & \qw \\
        \lstick{\ket{0_r}} & \gate{\mathbf{I}} & \gate{\mathbf{I}} & \gate{\mathbf{X}} & \gate{\mathbf{I}} & \gate{\mathbf{X}} & \gate{\mathbf{I}} & \gate{\mathbf{I}} & \gate{\mathbf{X}} & \qw
    }
    $$
    \caption{$\mathbf{TNN}^{(1)}$, the circuit obtained after the first update}
    \label{fig:TC1}
\end{figure}
\\
Continuing the algorithm: $\ket{\Psi^{(1)}} = \mathbf{TNN}^{(1)}\ket{\Psi}$ with
\begin{align*}
    \ket{\Psi^{(1)}} &= \left(a_{000}\ket{000} + a_{001}\ket{001} + a_{110}\ket{110}\right)\ket{0} \\
    &+ \left(a_{010}\ket{010} + a_{100}\ket{100} + a_{111}\ket{111}\right)\ket{1} \\
    &+ (a_{011}\ket{011} + a_{101}\ket{101})\ket{1}
\end{align*}
\\
Once again we perform $qt(f)(\ket{\Psi^{(1)}})$ and get 
$$
    E^{(1)}=\{011,101\}
$$ 
Applying the update rule, we switch the gates $\mathbf{G}_{011}$ and $\mathbf{G}_{101}$, resulting in $\mathbf{TNN}^{(2)}$ as represented in Figure \ref{fig:TC2}. 
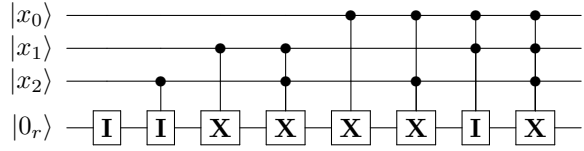
\begin{figure}[h]
    \centering
    $$
    \Qcircuit @R=1em @C=1em{
        \lstick{\ket{x_0}} & \qw        & \qw        & \qw        & \qw        & \ctrl{3}     & \ctrl{3}     & \ctrl{3}     & \ctrl{3}     & \qw \\
        \lstick{\ket{x_1}} & \qw        & \qw        & \ctrl{2}   & \ctrl{2}   & \qw          & \qw          & \ctrl{2}     & \ctrl{2}     & \qw \\
        \lstick{\ket{x_2}} & \qw        & \ctrl{1}   & \qw        & \ctrl{1}   & \qw          & \ctrl{1}    & \qw          & \ctrl{1}     & \qw \\
        \lstick{\ket{0_r}} & \gate{\mathbf{I}} & \gate{\mathbf{I}} & \gate{\mathbf{X}} & \gate{\mathbf{X}} & \gate{\mathbf{X}} & \gate{\mathbf{X}} & \gate{\mathbf{I}} & \gate{\mathbf{X}} & \qw
    }
    $$
    \caption{$\mathbf{TNN}^{(2)}$, the circuit obtained after the second update}
    \label{fig:TC2}
\end{figure}
\\
Applying $\mathbf{TNN}^{(2)}$ to $\ket{\Psi}$ now yields:
$$
    \ket{\Psi^{(2)}} = \sum_{\substack{x \in \B^3\\f(x)=0}}{a_x\ket{x}\ket{0}}+\sum_{\substack{x \in \B^3\\f(x)=1}}{a_x\ket{x}\ket{1}}
$$
And $E^{(2)}=qt(f)(\ket{\Psi^{(2)}})=\varnothing$ which is the termination condition of the algorithm. Applying $\Phi^{-1}$, the inverse of the isomorphism introduced in Section \ref{sec:A->Aq}, we get $\Phi^{-1}(\mathbf{TNN}^{(2)}) = \tilde{f}$ where
$$
    \tilde{f} \colon (x_0x_1x_2) \mapsto x_1 \oplus x_1.x_2 \oplus x_0 \oplus x_0.x_2 \oplus x_0.x_1.x_2 
$$
As this is the algebraic normal form of $f$ we conclude that $\tilde{f}=f$ and we have tuned the circuit properly.

\subsection{Proof of Termination and Correctness of the Algorithm}
We show here that provided with a superposition of all possible states and the operation $qt$, the algorithm will terminate after at most $n+1$ updates. We begin with the following:
\begin{lemma}\label{lem:T_x}
    Let $x \in \B^n$, we denote by $\mathcal{T}_x$ the set of gates that can be triggered by $\ket{x}\ket{q_r}$ and $w_H(x)=|1_x|$ the Hamming weight of $x$ then:
    $$
        \forall x \in \B^n, \ \mathcal{T}_x \subseteq \{\mathbf{G}_x\} \cup \{\mathbf{G}_u \ | \ w_H(u) < w_H(x)\}
    $$
\end{lemma}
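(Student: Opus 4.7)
The plan is to unpack what it means for a gate $\mathbf{G}_u$ to be triggered by $\ket{x}\ket{q_r}$, translate this into a simple set-theoretic containment on the index sets $1_u, 1_x$, and then read off the Hamming-weight bound.

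First I would recall that, by definition, $\mathbf{G}_u \in \{\mathbf{I}, \mathbf{C}_u\}$ where $\mathbf{C}_u$ is the multi-controlled $\mathbf{X}$ gate whose controls are exactly the qubits indexed by $1_u$. Since $\mathbf{I}$ never acts non-trivially, the only way $\mathbf{G}_u$ can be triggered on $\ket{x}\ket{q_r}$ is when $\mathbf{G}_u = \mathbf{C}_u$ and all its control qubits are in state $\ket{1}$. By Lemma \ref{lem:C_u}, $\mathbf{C}_u \ket{x}\ket{q_r} = \ket{x}\ket{q_r \oplus m_u(x)}$, so the gate actually flips the read-out qubit iff $m_u(x)=\prod_{i \in 1_u} x_i = 1$, which in turn is equivalent to $x_i = 1$ for every $i \in 1_u$, i.e. $1_u \subseteq 1_x$.

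Next I would observe that the condition $1_u \subseteq 1_x$ splits into exactly two cases. Either $1_u = 1_x$, in which case $u = x$ and $\mathbf{G}_u = \mathbf{G}_x$, or $1_u \subsetneq 1_x$. In the strict-containment case, the cardinalities satisfy $|1_u| < |1_x|$, which by the definition $w_H(\cdot) = |1_{(\cdot)}|$ gives $w_H(u) < w_H(x)$. Combining the two cases yields the desired inclusion $\mathcal{T}_x \subseteq \{\mathbf{G}_x\} \cup \{\mathbf{G}_u \mid w_H(u) < w_H(x)\}$.

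There is no real obstacle here: the only subtlety is being careful that ``triggered'' refers to the gate acting non-trivially on the input $\ket{x}\ket{q_r}$, so that both the choice $\mathbf{G}_u = \mathbf{C}_u$ and the control-activation condition $1_u \subseteq 1_x$ are needed. Everything else is a direct translation between the monomial $m_u$ from Section \ref{sec:ANF} and the support set $1_u$, together with the elementary fact that proper subset inclusion strictly decreases cardinality.
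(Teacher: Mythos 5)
Your proof is correct and follows essentially the same route as the paper's: reduce ``triggered'' to the containment $1_u \subseteq 1_x$, then note that equality of Hamming weights forces $1_u = 1_x$ and hence $u = x$, while strict containment gives $w_H(u) < w_H(x)$. The extra detail you supply on why triggering is equivalent to $m_u(x)=1$ via Lemma \ref{lem:C_u} is a harmless elaboration of what the paper leaves implicit.
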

\begin{proof}
    This comes from the fact that for $\mathbf{G}_u$ to be triggered by $x$, we must have $1_u \subseteq 1_x$ which leads to $ w_H(u) \leq w_H(x)$. Suppose now that $ w_H(u) = w_H(x)$. Because $|1_u|=w_H(u)$ and $|1_x|=w_H(x)$, necessarily we have $1_u = 1_x$ that is $u=x$.
\end{proof}
Lemma \ref{lem:T_x} shows that if $\mathbf{TNN} = \prod_{u \in \B^n}{\mathbf{G}_u}$ then for $x \in \B^n$ we have:
$$
    \mathbf{TNN}\ket{x}\ket{q_r} = \mathbf{G}_x\prod_{\substack{u \in \B^n\\w_H(u) < w_H(x)}}{\mathbf{G}_u}\ket{x}\ket{q_r}
$$
We will use this result to show the following:
\begin{lemma}\label{lem:k_update}
    Suppose that all the inputs are present in the superposition $\ket{\Psi}$, that is $a_x \neq 0$ for $x \in \B^n$. Then after the $k$-th update, the gates controlled by at most $k-1$ qubits will not be updated anymore.
\end{lemma}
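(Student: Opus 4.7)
The plan is to prove, by induction on $k$, the stronger assertion that after the $k$-th update, $\mathbf{TNN}^{(k)}\ket{x}\ket{0}=\ket{x}\ket{f(x)}$ for every $x\in\B^n$ with $w_H(x)\le k-1$. This immediately yields the lemma: such an $x$ cannot belong to $E^{(k)}$ (since $E^{(k)}$ consists of inputs whose read-out is $1\oplus f(x)$), so the corresponding gate is not flipped at the $(k+1)$-th update, and the induction propagates this correctness to all subsequent steps. The main ingredients will be Lemma \ref{lem:T_x} (to restrict attention to low-weight gates), Lemma \ref{lem:A_Q} (commutativity and self-inverse), and Lemma \ref{lem:C_u} (action of $\mathbf{C}_u$), together with the hypothesis $a_x\ne 0$ which guarantees $qt$ actually sees every input $x$.

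For the base case $k=1$, Lemma \ref{lem:T_x} gives $\mathbf{TNN}^{(1)}\ket{0^n}\ket{0}=\mathbf{G}_{0^n}^{(1)}\ket{0^n}\ket{0}$; the initial value $\mathbf{G}_{0^n}^{(0)}=\mathbf{I}$ is flipped during the first update precisely when $0^n\in E^{(0)}$, i.e. when $f(0^n)=1$, and a direct case split shows the result $\ket{0^n}\ket{f(0^n)}$ in both parities.

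For the inductive step I would split on $w_H(x)$. If $w_H(x)\le k-1$, the inductive hypothesis already gives $x\notin E^{(k)}$, and more generally it forces every $u\in E^{(k)}$ to satisfy $w_H(u)\ge k$; hence no gate with $w_H(u)\le k-1$ is touched at step $k+1$, and Lemma \ref{lem:T_x} implies $\mathbf{TNN}^{(k+1)}\ket{x}\ket{0}=\mathbf{TNN}^{(k)}\ket{x}\ket{0}=\ket{x}\ket{f(x)}$. If $w_H(x)=k$, I would first use Lemma \ref{lem:A_Q} to reorder the circuit as $\mathbf{G}_x^{(k+1)}\prod_{w_H(u)<k}\mathbf{G}_u^{(k+1)}$ on $\ket{x}\ket{0}$ (the remaining factors act trivially by Lemma \ref{lem:T_x} combined with Lemma \ref{lem:C_u}), and observe that all factors with $w_H(u)<k$ are unchanged between steps $k$ and $k+1$. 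Then two sub-cases appear: if $x\notin E^{(k)}$, nothing changes and $\mathbf{TNN}^{(k)}\ket{x}\ket{0}=\ket{x}\ket{f(x)}$ already holds; if $x\in E^{(k)}$, the update rule gives $\mathbf{G}_x^{(k+1)}=\mathbf{G}_x^{(k)}\mathbf{C}_x$, and using commutativity together with $m_x(x)=1$ yields
$$
\mathbf{TNN}^{(k+1)}\ket{x}\ket{0}=\mathbf{C}_x\,\mathbf{TNN}^{(k)}\ket{x}\ket{0}=\mathbf{C}_x\ket{x}\ket{1\oplus f(x)}=\ket{x}\ket{f(x)}.
$$

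The main obstacle is the bookkeeping in the $w_H(x)=k$ case: one has to carefully isolate the single gate whose weight matches $w_H(x)$, show that all strictly-heavier gates are simply not triggered by $\ket{x}$, and exploit commutativity to let the freshly applied $\mathbf{C}_x$ slide past the unchanged lower-weight block so that the inductive description of $\mathbf{TNN}^{(k)}\ket{x}\ket{0}$ can be used. Everything else is a routine unwinding of the update rule.
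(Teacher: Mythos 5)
Your proposal is correct and follows essentially the same route as the paper's proof: induction on $k$, with the base case handled via the all-zeros input and the inductive step isolating the single weight-$k$ gate $\mathbf{G}_x$ using Lemma \ref{lem:T_x} and the commutativity from Lemma \ref{lem:A_Q}. Your invariant (``the circuit outputs $\ket{x}\ket{f(x)}$ for all $w_H(x)\le k-1$'') is just a repackaging of the paper's invariant (``gates of weight $<k$ have reached their final value''), and the case analysis on the update rule matches the paper's.
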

\begin{proof}
    The proof can be found in Appendix \ref{ap:proof}
\end{proof}
From Lemma \ref{lem:k_update} stems the following corollary: 
\begin{corlry}\label{crlr:E^k}
    Let $E^{(k)}$ be the set of inputs for which the output by the quantum network is erroneous after the $k$-th update. Then:
    $$
        \forall k > 0, \ E^{(k)} \subseteq \{x \in \B^n \ | \ w_H(x) \geq k\}
    $$
\end{corlry}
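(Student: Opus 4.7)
The plan is to deduce the corollary almost immediately from Lemma \ref{lem:k_update}, by unpacking the relationship between $E^{(k)}$ and the $(k+1)$-th update. Recall that the update rule specifies that the circuit moves from $\mathbf{TNN}^{(k)}$ to $\mathbf{TNN}^{(k+1)}$ by switching precisely the gates $\mathbf{G}_u$ for $u \in E^{(k)}$. So the statement ``$\mathbf{G}_u$ is updated at step $k+1$'' is literally the same as ``$u \in E^{(k)}$''. This equivalence is the only bridge we need between the Hamming-weight statement of Lemma \ref{lem:k_update} and the error-set statement of the corollary.

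Concretely, I would fix $k > 0$ and any $u \in \B^n$ with $w_H(u) \leq k-1$. Lemma \ref{lem:k_update} guarantees that once $\mathbf{TNN}^{(k)}$ has been produced, the gate $\mathbf{G}_u$ will never be updated again; in particular, $\mathbf{G}_u$ is not switched when passing from $\mathbf{TNN}^{(k)}$ to $\mathbf{TNN}^{(k+1)}$. By the update rule this forces $u \notin E^{(k)}$. Taking the contrapositive, every element of $E^{(k)}$ must satisfy $w_H(x) \geq k$, which is exactly the claimed inclusion.

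There is essentially no obstacle here, as the content of the corollary has been packaged into Lemma \ref{lem:k_update}; the only thing to be careful about is the off-by-one in indexing, namely that the gates whose Hamming weight is at most $k-1$ are the ones guaranteed frozen after the $k$-th update, which is precisely what is needed to exclude Hamming weights strictly less than $k$ from $E^{(k)}$.
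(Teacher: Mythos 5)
Your argument is correct and is essentially identical to the paper's own proof: both deduce the corollary from Lemma \ref{lem:k_update} via the observation that $u \in E^{(k)}$ is equivalent to $\mathbf{G}_u$ being switched at the next update, the only cosmetic difference being that you phrase it as a contrapositive while the paper phrases it as a contradiction. Your explicit attention to the off-by-one in the indexing is a welcome clarification, not a deviation.
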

\begin{proof}
    Let $k > 0$ and suppose there exist an $x \in \B^n$ such that $w_H(x) = s < k$ and $x \in E^{(k)}$. Then according to the update rule, the gate $\mathbf{G}_x$ will be updated which is in contradiction with Lemma \ref{lem:k_update}.
\end{proof}
We recall that the process terminates whenever a circuit $\mathbf{TNN}^{(h)}$ such that $E^{(h)} = \varnothing$ is reached. The previous results then allow us to prove:
\begin{thrm}\label{th:termination}
    When presented with a superposition of all possible inputs, the process will terminate after at most $n+1$ updates.
\end{thrm}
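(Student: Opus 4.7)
The plan is to obtain Theorem~\ref{th:termination} as an essentially immediate consequence of Corollary~\ref{crlr:E^k}, by observing that the Hamming weight on $\B^n$ is bounded above by $n$. So the real content has already been packaged into Lemma~\ref{lem:k_update} and its corollary; what remains is a one-line counting argument.

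Concretely, I would argue as follows. First, recall that for any $x \in \B^n$ we have $0 \le w_H(x) \le n$, since $w_H(x) = |1_x|$ and $1_x \subseteq \{0,\ldots,n-1\}$. Next, instantiate Corollary~\ref{crlr:E^k} at $k = n+1$: this gives
\[
E^{(n+1)} \subseteq \{\, x \in \B^n \ | \ w_H(x) \ge n+1 \,\}.
\]
The set on the right is empty, because no element of $\B^n$ has Hamming weight strictly larger than $n$. Hence $E^{(n+1)} = \varnothing$, which is precisely the termination condition of Algorithm~\ref{alg:tuning}. Therefore the learning process halts after at most $n+1$ updates.

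There is essentially no obstacle at this stage: all the real work has already been carried out in Lemma~\ref{lem:k_update}, where the monotone ``freezing'' of the low-weight gates is established by induction on the update count. The only thing one should be careful about is the indexing convention, namely that the initial circuit $\mathbf{TNN}^{(0)} = \mathbf{I}$ is not counted as an update, so that applying the corollary at $k=n+1$ really corresponds to ``at most $n+1$ updates''. A brief sanity check against the worked example in Section~\ref{sec:alg}, where $n=3$ and the algorithm terminates after $2 \le n+1 = 4$ updates, confirms that the bound is consistent (and in general not tight).
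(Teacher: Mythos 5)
Your argument is correct and coincides with the paper's own proof: both instantiate Corollary~\ref{crlr:E^k} at $k=n+1$ and conclude $E^{(n+1)}=\varnothing$ because no element of $\B^n$ has Hamming weight exceeding $n$. Your additional remarks on the indexing convention and the non-tightness of the bound are accurate but not part of the paper's argument.
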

\begin{proof}
    This is a direct consequence of Corollary \ref{crlr:E^k}. Indeed, according to Corollary \ref{crlr:E^k}, for $k = n+1$, we have $E^{(n+1)} \subseteq \{x \in \B^n \ | \ w_H(x) \geq n+1\}$. But $\{x \in \B^n \ | \ w_H(x) \geq n+1\} = \varnothing$. Hence $E^{(n+1)} \subseteq \varnothing $ and $E^{(n+1)} = \varnothing$.
\end{proof}
We have effectively shown that the learning process will halt after at most $n+1$ updates. We now have to prove that when it terminates, the resulting circuit is well-tuned. Let $\mathbf{TNN}$ be the final circuit and $E = qt(f)(\mathbf{TNN}\ket{\Psi})$, then by the halting condition, we have $E = \varnothing$ and we can show:
\begin{thrm}\label{th:correctness}
    We recall that $\Phi$ is the isomorphism introduced in Section \ref{sec:A->Aq} that transforms a Boolean function into the quantum circuit corresponding to its algebraic normal form. Then:
    \\
    $E = \varnothing \Rightarrow \mathbf{TNN} = \Phi(f)$
\end{thrm}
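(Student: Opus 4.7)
The plan is to unpack what $E=\varnothing$ actually says about the action of $\mathbf{TNN}$ on computational basis states, then invoke the isomorphism $\Phi$ from Theorem \ref{th:A->A_Q} to identify $\mathbf{TNN}$ with $\Phi(f)$.

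First, I would expand the definition of $qt$: since the superposition $\ket{\Psi}$ has all amplitudes $a_x\neq 0$, the hypothesis $E=\varnothing$ is equivalent to the statement that for every $x\in\B^n$,
$$\mathbf{TNN}\ket{x}\ket{0}\neq \ket{x}\ket{1\oplus f(x)}.$$
The key observation is that $\mathbf{TNN}$ is a product of gates, each of which is either $\mathbf{I}$ or some $\mathbf{C}_u$. By Lemma \ref{lem:C_u}, every such gate fixes the first $n$ qubits and XORs a Boolean value into the read-out qubit, so $\mathbf{TNN}\ket{x}\ket{0}$ must lie in the two-element set $\{\ket{x}\ket{0},\ket{x}\ket{1}\}$. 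Consequently the dichotomy is exhaustive and the inequality above forces
$$\mathbf{TNN}\ket{x}\ket{0}=\ket{x}\ket{f(x)}\qquad\text{for every }x\in\B^n.$$

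Next, I would use the fact that $\mathbf{TNN}=\prod_{u\in\B^n}\mathbf{G}_u$ with each $\mathbf{G}_u\in\{\mathbf{I},\mathbf{C}_u\}$, so $\mathbf{TNN}\in\mathcal{A}_Q$. Since $\Phi\colon\mathcal{A}\to\mathcal{A}_Q$ is an isomorphism (Theorem \ref{th:A->A_Q}), there exists a unique $g\in\B^{\B^n}$ with $\mathbf{TNN}=\Phi(g)$. Applying Lemma \ref{lem:phi(f)} gives $\Phi(g)\ket{x}\ket{0}=\ket{x}\ket{g(x)}$ for every $x$. Comparing with the identity established above, $\ket{x}\ket{g(x)}=\ket{x}\ket{f(x)}$ for every $x\in\B^n$, so $g=f$ and hence $\mathbf{TNN}=\Phi(f)$.

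The argument is short because most of the work has been front-loaded into the isomorphism $\Phi$ and the preservation lemma; the only genuinely substantive step is the one that converts the inequality $\mathbf{TNN}\ket{x}\ket{0}\neq \ket{x}\ket{1\oplus f(x)}$ into the equality $\mathbf{TNN}\ket{x}\ket{0}=\ket{x}\ket{f(x)}$, and this hinges on the structural fact that every element of $\mathcal{A}_Q$ acts on a computational basis state by XOR-ing a single bit into the read-out qubit, leaving the input register untouched. I do not expect any real obstacle: once that dichotomy is in place, the isomorphism delivers the conclusion immediately.
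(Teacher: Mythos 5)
Your proof is correct and follows essentially the same route as the paper's: establish $\mathbf{TNN}\ket{x}\ket{0}=\ket{x}\ket{f(x)}$ for all $x$, note $\mathbf{TNN}\in\mathcal{A}_Q$, and compare $\Phi^{-1}(\mathbf{TNN})$ with $f$ via Lemma \ref{lem:phi(f)}. The only difference is that you explicitly justify the passage from ``not equal to $\ket{x}\ket{1\oplus f(x)}$'' to ``equal to $\ket{x}\ket{f(x)}$'' via the two-outcome dichotomy, a step the paper's proof takes for granted.
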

\begin{proof}
    By definition of $E$, the fact that $E = \varnothing$ means that
    $$
        \forall x \in \B^n, \ \mathbf{TNN}\ket{x}\ket{0} = \ket{x}\ket{f(x)}
    $$
    By construction, we have $\mathbf{TNN} = \prod_{u \in \B^n}{\mathbf{C}_u^{\alpha_u}}$ with $\alpha_u \in \{0,1\}$ thus $\mathbf{TNN} \in \mathcal{A}_Q$. Let $\tilde{f} = \Phi^{-1}(\mathbf{TNN})$, then by Lemma \ref{lem:phi(f)} we must have
    $$
        \forall x \in \B^n, \ \mathbf{TNN}\ket{x}\ket{0} = \ket{x}\ket{\tilde{f}(x)}
    $$
    This means that for $x \in \B^n$, we have $\tilde{f}(x) = f(x)$, thus $\tilde{f} = f$ and $\mathbf{TNN} = \Phi(f)$.
\end{proof}
So far, we have demonstrated that when presented with a superposition of all the possible inputs, the learning algorithm terminates after at most $n+1$ updates of the circuit. Additionally, when it stops the resulting circuit is correctly tuned. But these results are conditioned to our ability to identify the states corresponding to a wrong input-output relation. We thus have to detail a way to perform the operation that we have denoted $qt$.

\section{The operation $qt$}
Let $\mathbf{TNN}$ be a tunable circuit and $\ket{\Psi}$ be a state superposition of the form $\ket{\Psi} = \sum_{x \in \B^n}{a_x\ket{x}}\ket{0}$. We remind that we want to define an operator $qt$ such that for a Boolean function $f \in \B^{\B^n}$ we have:
\begin{align*}
    &qt(f)(\mathbf{TNN}\ket{\Psi}) \\
    &=\{x \in \B^n \ | \ a_x \neq 0, \ \mathbf{TNN}\ket{x}\ket{0} = \ket{x}\ket{1 \oplus f(x)}\}
\end{align*}
So let us take $f \in \B^{\B^n}$ and define $\mathbf{O}(f)$ the oracle such that:
\begin{equation*}
    \forall x \in \B^n  \begin{cases}
                            \mathbf{O}(f)\ket{x}\ket{f(x)} = \ket{x}\ket{0} \\
                            \mathbf{O}(f)\ket{x}\ket{1 \oplus f(x)} = \ket{x}\ket{1}
                        \end{cases}
\end{equation*}
$\mathbf{O}(f)$ performs a permutation on the usual computational basis, hence it is a unitary operation. Let $\ket{\Psi'} = \mathbf{TNN}\ket{\Psi}$, then
$$
    \mathbf{O}(f)\ket{\Psi'} = W(\ket{\Psi'}) + R(\ket{\Psi'})
$$
Where we have:
$$
    W(\ket{\Psi'}) = \sum_{\substack{x \in \B^n \\ \mathbf{TNN}\ket{x}\ket{0}=\ket{x}\ket{1 \oplus f(x)}}}{a_x\ket{x}}\ket{1}
$$ 
And 
$$
    R(\ket{\Psi'}) = \sum_{\substack{x \in \B^n \\ \mathbf{TNN}\ket{x}\ket{0}=\ket{x}\ket{f(x)}}}{a_x\ket{x}}\ket{0}
$$
Measuring the read-out qubit, the probability $P_1$ of it being in state $\ket{1}$ is:
$$
    P_1 = \sum_{\substack{x \in \B^n \\ \mathbf{TNN}\ket{x}\ket{0}=\ket{x}\ket{1 \oplus f(x)}}}{|a_x|^2}
$$
The resulting circuit is given in Figure \ref{fig:qt}.
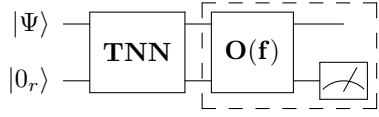
\begin{figure}[h]
    \centering
    $$
        \Qcircuit @R=1em @C=1em{
            \lstick{\ket{\Psi}}   & \multigate{1}{\mathbf{TNN}} & \multigate{1}{\mathbf{\mathbf{O}(f)}} & \qw \\
            \lstick{\ket{0_r}} & \ghost{\mathbf{TNN}}        & 
            \ghost{\mathbf{\mathbf{O}(f)}}        & \meter
            \gategroup{1}{3}{2}{4}{.7em}{--}
        }
    $$
    \caption{The operation $qt$ following the tunable neural network}
    \label{fig:qt}
\end{figure}
\\
In order to fully implement $qt$ we thus need to accurately estimate $P_1$ and reconstruct the sum resulting in this estimation. The way both of these tasks can be performed depends on the superposition we are using.

\subsection{Building a Suitable Superposition}\label{sec:superposition}
Suppose we can accurately estimate $P_1$, we want to build a set of amplitude $\{a_x \in \C \ | \ x \in \B^n\}$ such that there exist a unique subset $S \subset \B^n$ verifying:
$$
    P_1 = \sum_{x \in S}{|a_x|^2}
$$
This can be done using the uniqueness of the binary decomposition. For $x \in \B^n$ we note $x_{|10}$ its conversion in the decimal system and we consider the following superposition $\ket{\Psi}$:
$$
    \ket{\Psi} = \frac{1}{\sqrt{2^{2^n}-1}}\sum_{x \in \B^n}{\sqrt{2^{x_{|10}}}\ket{x}}\ket{0}
$$
The uniqueness of the binary decomposition yields:
\begin{align*}
    &\forall S,S' \subset \B^n, \ S \neq S'\\
    &\iff \frac{1}{2^{2^n}-1}\sum_{x \in S}{2^{x_{|10}}} \neq 
    \frac{1}{2^{2^n}-1}\sum_{x \in S'}{2^{x_{|10}}}
\end{align*}
Which is what we aimed for. Nonetheless we remind that when tuning, the gates controlled by the least number of qubits are tuned to their definitive value early in the process. We want to reflect this particular behaviour in the superposition by granting a large amplitude to inputs with a small Hamming weight. We can then switch to a superposition where inputs with a large Hamming weight have a large amplitude. For $h \in \{0,\ldots,n\}$ we define $W_h=\{x \in \B^n \ | \ w_H(x) = h \}$ For $x \in W_h$, we introduce $o_h(x)$ its rank in $W_h$. This rank is induced by the natural order $<$ on $\N$ applied to $x_{|10}$ for $x \in W_h$. We illustrate this set of functions for $n=4$ with Figure \ref{fig:o_h}.
\begin{figure}[h]
    \centering
    \begin{subfigure}{0.19\textwidth}
        \centering
        \begin{tabular}{|c|c|}
            \hline
            x    & $o_0(x)$ \\
            \hline
            0000 & 1      \\
            \hline
        \end{tabular}
        \caption{Values of $o_0$}
    \end{subfigure}
    \begin{subfigure}{0.19\textwidth}
        \centering
        \begin{tabular}{|c|c|}
            \hline
            x    & $o_1(x)$ \\
            \hline
            0001 & 1      \\
            \hline
            0010 & 2      \\
            \hline
            0100 & 3      \\
            \hline
            1000 & 4      \\
            \hline
        \end{tabular}
        \caption{Values of $o_1$}
    \end{subfigure}
    \begin{subfigure}{0.19\textwidth}
        \centering
        \begin{tabular}{|c|c|}
            \hline
            x    & $o_2(x)$ \\
            \hline
            0011 & 1      \\
            \hline
            0101 & 2      \\
            \hline
            0110 & 3      \\
            \hline
            1001 & 4      \\
            \hline
            1010 & 5      \\
            \hline
            1100 & 6      \\
            \hline
        \end{tabular}
        \caption{Values of $o_2$}
    \end{subfigure}
    \begin{subfigure}{0.19\textwidth}
        \centering
        \begin{tabular}{|c|c|}
            \hline
            x    & $o_3(x)$ \\
            \hline
            0111 & 1      \\
            \hline
            1011 & 2      \\
            \hline
            1101 & 3      \\
            \hline
            1110 & 4      \\
            \hline
        \end{tabular}
        \caption{Values of $o_3$}
    \end{subfigure}
    \begin{subfigure}{0.19\textwidth}
        \centering
        \begin{tabular}{|c|c|}
            \hline
            x    & $o_4(x)$ \\
            \hline
            1111 & 1      \\
            \hline
        \end{tabular}
        \caption{Values of $o_4$}
    \end{subfigure}
    \caption{The set of functions $o_h$ for $n=4$ and their respective values}
    \label{fig:o_h}
\end{figure}
\\
This allows us to recursively construct the following function:
\begin{align}\label{eq:p}
    \begin{cases}
        p(0) = 0 \\
        p(x) = \left(\max_{y \in W_{w_H(x)-1}}p(y)\right)+o_{w_H(x)}(x)
    \end{cases}
\end{align}
For $n=4$, we have computed the values of $p$ in Table \ref{tab:p4}.
\begin{table}[h]
    \centering
    \begin{tabular}{|c|c||c|c|}
        \hline
        $x$ & $p(x)$ & $x$ & $p(x)$ \\
        \hline
        0000 & 0  & 1000 & 4   \\
        0001 & 1  & 1001 & 8   \\
        0010 & 2  & 1010 & 9   \\
        0011 & 5  & 1011 & 12  \\
        0100 & 3  & 1100 & 10  \\
        0101 & 6  & 1101 & 13  \\
        0110 & 7  & 1110 & 14  \\
        0111 & 11 & 1111 & 15 \\
        \hline
    \end{tabular}
    \caption{Values of p(x) for $n=4$}
    \label{tab:p4}
\end{table}

We then use this function to define the following superpositions:
$$
    \ket{\Psi_{\downarrow}} = \frac{1}{\sqrt{2^{2^n}-1}}\sum_{x \in \B^n}{\sqrt{2^{2^n-1-p(x)}}\ket{x}}\ket{0}
$$
And similarly:
$$
    \ket{\Psi_{\uparrow}} = \frac{1}{\sqrt{2^{2^n}-1}}\sum_{x \in \B^n}{\sqrt{2^{p(x)}}\ket{x}}\ket{0}
$$
The  reason we introduce these superposition is that it is impossible to exactly determine $P_1$. We thus want to minimize the impact of any estimation errors. By working with $\ket{\Psi_{\downarrow}}$ first, we are more likely to correctly identify the inputs with small Hamming weight for which the output by the circuit is not that of the target function. By the update rule and Lemma \ref{lem:k_update}, we can be confident that, despite the estimation incertitude, the gates controlled by a small number of qubits can still be tuned to their definitive value early in the training process. Once these gates have been correctly tuned, we can switch to $\ket{\Psi_{\uparrow}}$ in order to tune the gates controlled by a large number of qubits.

\subsection{Estimating $P_1$}
In this section we discuss a way to estimate $P_1$ accurately enough to identify the inputs that have been misclassified by the network. For now, we consider the superposition $\ket{\Psi_{\downarrow}}$. Let $f \in \B^{\B^n}$ be the target function, $\mathbf{TNN}$ the circuit being tuned and $\mathbf{O}(f)$ the oracle we introduced earlier. Then a measurement of the read-out qubit of $\ket{\Psi'}=\mathbf{O}(f)\mathbf{TNN}\ket{\Psi_{\downarrow}}$ can be modelled by a Bernoulli process where the probability of measuring $\ket{1}$ is $P_1$. We want to determine the number of samples $s$ needed in order to estimate $P_1$ within a margin of error $\epsilon$.
\\
Let $X$ be the random variable representing the outcome of the measurement, then $X$ has a Bernoulli distribution and we have $P(X=1)=P_1$. Suppose that the number of samples is large enough, then a 95\% confidence interval CI \cite{Wallis2013} for $P_1$ is given by:
$$
    \text{CI} = \left[\overline{P_1}-z_{0.025}\frac{\sigma}{\sqrt{s}}, \ \overline{P_1}+z_{0.025}\frac{\sigma}{\sqrt{s}}\right]
$$
Where $\sigma = \sqrt{\overline{P_1}(1-\overline{P_1})}. $This means that we want:
$$
    \epsilon = z_{0.025}\sqrt{\frac{\overline{P_1}(1-\overline{P_1})}{s}}
$$
Or 
$$
    s = \frac{\sigma}{\sqrt{s}} = \frac{\overline{P_1}(1-\overline{P_1})}{\epsilon^2z_{0.025}^2}
$$
Given that $z_{0.025} = 1.96$ we can approximate with $z_{0.025} = 2$. And the term $\overline{P_1}(1-\overline{P_1})$ taking its maximum value when $\overline{P_1}=\frac{1}{2}$ we can write:
$$
    s = \frac{1}{16\epsilon^2}
$$
To determine $\epsilon$, remind that during the learning process we use two different superpositions $\ket{\Psi_{\downarrow}}$ and $\ket{\Psi_{\uparrow}}$ where the amplitude is larger for inputs with small and large Hamming weight respectively. This means that it suffices to determine $P_1$ up to a margin equal to:
$$
    \epsilon = \frac{2^{2^{n-1}}}{2^{2^n}-1} \approx \frac{1}{2^{2^{n-1}}}
$$
This in turn yields
$$
    s = O(2^{2^n})
$$
While the number of samples needed to obtain the required accuracy can seem quite large, this scale can be explained by the fact that the operation performed by $qt$ is similar to quantum state tomography. This process aims at reconstructing a quantum state by performing measurements on copies of this state. The number of copies needed to achieve this task to a precision $\epsilon$ then scales in $O\left(\frac{1}{\epsilon^2}\right)$ \cite{Haah2017}.

\subsection{Retrieving $E$}
Suppose we have $s$ samples as specified previously, and among this samples, the read-out qubit has been measured in the state $\ket{1}$ $N_1$ times, then $\overline{P_1} = \frac{N_1}{s}$. We recall that $E = \{x \in \B^n \ | \ \mathbf{O}(f)\mathbf{TNN}\ket{x}\ket{0}=\ket{x}\ket{1}\}$ which is the set of inputs for which the output computed by the tunable circuit $\mathbf{TNN}$ is wrong. We then have:
$$
    P_1 = \frac{1}{2^{2^n}-1}\sum_{x \in E}{2^{2^n-1-p(x)}}
$$
Thus:
$$
    \sum_{x \in E}{2^{2^n-1-p(x)}} \approx \left\lfloor\frac{N_1(2^{2^n}-1)}{s}\right\rfloor
$$
The uniqueness of the binary decomposition then allows us to retrieve $E$ as required.

\section{Preparation of the states $\ket{\Psi_{\downarrow}}$ and $\ket{\Psi_{\uparrow}}$}
In this section we present a way to prepare the state $\ket{\Psi_{\downarrow}}$. We remind that
$$
    \ket{\Psi_{\downarrow}} = \frac{1}{\sqrt{2^{2^n}-1}}\sum_{x \in \B^n}{\sqrt{2^{2^n-1-p(x)}}\ket{x}}\ket{0}
$$
Where $p \colon \B^n \to \N$ is the ranking function introduced in Section \ref{sec:superposition}. The last qubit being the read-out we will only focus on the first $n$ qubits. As usual, we start with the state 
$$
    \ket{\phi_0}=\ket{0}^{\otimes n}
$$
In order to get the superposition of interest, we will proceed in two steps. First generate the amplitudes, then perform a permutation to match each amplitude with its corresponding state.
\subsection{Generation of the amplitudes}
 To generate the correct amplitudes we can use the rotation operator $\mathbf{R}_y(\theta)$ defined by:
$$
    \mathbf{R}_y(\theta)= 
                        \begin{pmatrix}
                            \cos(\frac{\theta}{2}) & -\sin(\frac{\theta}{2}) \\
                            \sin(\frac{\theta}{2}) & \cos(\frac{\theta}{2})
                        \end{pmatrix}
$$
More precisely, let $\theta_n = \arccos\left(\sqrt{\frac{2^{2^{n}}}{2^{2^{n}}+1}}\right)$, we define the operator  $\mathbf{U_{\downarrow}}$ to be: 
$$
    \mathbf{U_{\downarrow}} = \mathbf{R}_y\left(2\theta_{n-1}\right) \otimes \mathbf{R}_y\left(2\theta_{n-2}\right) \otimes \ldots \otimes \mathbf{R}_y\left(2\theta_0\right)
$$
Then if $u_{1,k}$ is the element of $\mathbf{U_{\downarrow}}$ located in the first column and $k$-th row, we have $u_{1,k}=\sqrt{\frac{2^{2^n-k}}{2^{2^n}-1}}$. For example, with two qubits, we get the following operator:
\begin{align*}
    \mathbf{U_{\downarrow}} &=   \frac{1}{\sqrt{5}}
                            \begin{pmatrix}
                                2 & -1 \\
                                1 & 2
                            \end{pmatrix}
                            \otimes
                            \frac{1}{\sqrt{3}}
                            \begin{pmatrix}
                                \sqrt{2} & -1 \\
                                1 & \sqrt{2}
                            \end{pmatrix}\\
                        &=   \frac{1}{\sqrt{15}}
                            \begin{pmatrix}
                                \sqrt{8} & -2 & -\sqrt{2} & 1 \\
                                2 & \sqrt{8} & -1 & -\sqrt{2} \\
                                \sqrt{2} & -1 & \sqrt{8} & -2 \\
                                1 & \sqrt{2} & 2 & \sqrt{8}
                            \end{pmatrix}
\end{align*}
Thus applying $\mathbf{U_{\downarrow}}$ to $\ket{\phi_0}$ yields
$$
    \ket{\phi_1} = \frac{1}{\sqrt{2^{2^n}-1}}\sum_{x \in \B^n}{\sqrt{2^{2^n-1-x_{|10}}}\ket{x}}
$$
Where $x_{|10}$ is the conversion in the decimal system of $x \in \B^n$. So with $n$ 1-qubit gates, we have generated a superposition that is close to the one we are looking for, up to a permutation over the states.
\subsection{Permutation over the states}
To get $\ket{\Psi_{\downarrow}}$ we finally need to perform a permutation over some of the states. Let us denote $\sigma \colon \B^n \to \B^n$, the permutation that sorts $\B^n$ following the ranking introduced by $p$ as defined by (\ref{eq:p}), that is $\sigma(x) = p^{-1}(x_{|10})$. 

Let us first illustrate the process of building the permutation operator with a simple example. 
\begin{ex}
    For three qubits, we have:
    $$
        \sigma = 
                \begin{pmatrix}
                    000 & 001 & 010 & 011 & 100 & 101 & 110 & 111 \\
                    000 & 001 & 010 & 100 & 011 & 101 & 110 & 111
                \end{pmatrix}
    $$
    Where the second line is the image of the first line by $\sigma$. We notice that in this case, $\sigma$ is simply the transposition where 011 and 100 are swapped while the other elements stay identical: $\sigma = (011 \ 100)$. It is then quite easy to translate $\sigma$ into a quantum circuit \cite{Chuang2010}. To do so we can use a Gray code. Starting from 011 we incrementally get to 100 by flipping just one bit at a time. This yields the following sequence:
    $$
        011 \rightarrow 111 \rightarrow 101 \rightarrow 100
    $$
    This sequence then allows for a quantum circuit to be built. Each transition is represented by a $\mathbf{X}$ gate controlled by the qubits that do not flip, the target of this gate being the flipped qubit. Figure \ref{fig:perm3} depicts the permutation operator built in such a way.
    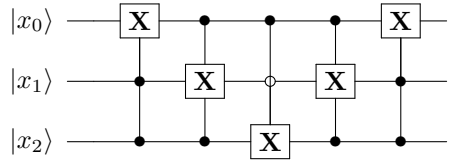
\begin{figure}[h]
        \centering
        $$
        \Qcircuit @R=1em @C=1em{
            \lstick{\ket{x_0}} & \qw & \gate{\mathbf{X}}  & \ctrl{1}           & \ctrl{2}           & 
            \ctrl{1}           & \gate{\mathbf{X}} & \qw \\
            \lstick{\ket{x_1}} & \qw & \ctrl{-1}          & \gate{\mathbf{X}}  & \ctrlo{1}          & \gate{\mathbf{X}}  & \ctrl{-1}         & \qw\\
            \lstick{\ket{x_2}} & \qw & \ctrl{-2}          & \ctrl{-1}          & \gate{\mathbf{X}}  & 
            \ctrl{-1}          & \ctrl{-2}         & \qw
        }
        $$
        \caption{Permutation operator for three qubits}
        \label{fig:perm3}
    \end{figure}
    
    To sum up the three qubits example, the circuit preparing $\ket{\Psi_{\downarrow}}$ is given in Figure \ref{fig:prep3}.
    \begin{figure}[h]
        \centering
        $$
        \Qcircuit @R=1em @C=1em{
            \lstick{\ket{0}} & \gate{\mathbf{R}_y\left(2\theta_2\right)} & \gate{\mathbf{X}}  & \ctrl{1}  & \ctrl{2}  & \ctrl{1}  & \gate{\mathbf{X}} & \qw \\
            \lstick{\ket{0}} & \gate{\mathbf{R}_y\left(2\theta_1\right)}   & 
            \ctrl{-1}          & \gate{\mathbf{X}}  & \ctrlo{1} & \gate{\mathbf{X}}  & \ctrl{-1} & \qw \\
            \lstick{\ket{0}} & \gate{\mathbf{R}_y\left(2\theta_0\right)}   & 
            \ctrl{-2}          & \ctrl{-1} & \gate{\mathbf{X}}  & \ctrl{-1} & \ctrl{-2} & \qw \\
            \lstick{\ket{0}} & \qw & \qw & \qw & \qw & \qw & \qw & \qw
        }
        $$
        \caption{Preparation of $\ket{\Psi_{\downarrow}}$ for three qubits}
        \label{fig:prep3}
    \end{figure}
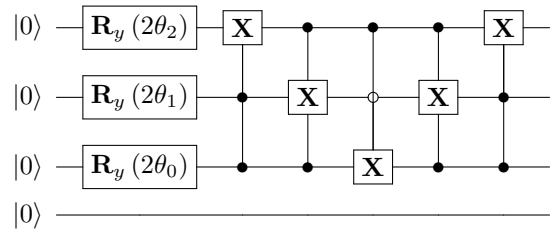
\end{ex}

In the general case, a permutation can always be decomposed as a product of transpositions. This can be done with a procedure described in Algorithm \ref{alg:permutToTransp}. The minimum used in this algorithm can be taken with respect to any order relation on $\B^n$. In our case, we will use the order relation deriving from the natural one on $\N$. 
\begin{algorithm}
    \SetAlgoLined
    prod $\leftarrow Id$\;
    $\sigma' \leftarrow \sigma$\;
    \While{$\sigma' \neq Id$}{
        $x_0 \leftarrow \min_{x \in \B^n}\{\sigma'(x) \neq x\}$\;
        prod $\leftarrow prod \circ \left(x_0 \ \sigma'(x_0)\right)$\;
        $\sigma' \leftarrow \left(x_0 \ \sigma'(x_0)\right) \circ \sigma'$\;
    }
    \textbf{return}(prod)
    \caption{Algorithm giving a decomposition of the permutation $\sigma$}
    \label{alg:permutToTransp}
\end{algorithm}
\\
To prepare the state $\ket{\Psi_{\uparrow}}$, it suffices to change the argument of the rotation operators used to generate the amplitudes by swapping the $\arccos$ for the $\arcsin$. The permutation operator is then the same as for $\ket{\Psi_{\downarrow}}$.

\section{Implementation}
The implementation of this proposition has been done using the Qiskit Python library (https://qiskit.org/) developed by IBM. While Qiskit offers the possibility to remotely run experiments on IBM quantum processors, we have limited our implementation to the quantum simulator embedded with this library. The code can be found in the following repository : https://gitlab.doc.ic.ac.uk/vjp17/tnn

\subsection{Experimental Results}
In order  to verify the theoretical results, we have implemented our quantum neural network for the cases $n=2$ and $n=3$. Most of the operations performed in our approach have been implemented using the basic set of gates provided by the library. The only gate that was not in this set is the oracle $\mathbf{O}(f)$. This gate has thus been implemented using the Operator class that allows us to create a custom gate by specifying its matrix in the computational basis.

Each function in $\BB[2]$ has been indexed according to its output. For example, we have $f_0(x)=0$ for $x \in \B^2$, $f_1(00)=1$ and $f_1(x)=0$ for $x \in \B^2 \setminus \{00\}$ and $f_{15}(x)=1$ for $x \in \B^2$. For each of these function we have run the training algorithm presented in this paper and counted the number of update steps required until convergence. To try and smooth out the effect of the random process, each experiment has been run 100 times and the results have been averaged. In addition, the theoretical values of the number of steps have also been computed. These values have been obtained with the state vector simulator of Qiskit. This simulator gives a direct access to the state vector and thus allows us to perform the operation $qt$ ideally, that is without any errors or approximations. In the theoretical case, the training has been performed with a state preparation that simply consists of Hadamard gates that generate a uniform superposition of all the possible inputs and the $qt$ operation is realised by reading the state vector output by the oracle $\mathbf{O}(f)$.

Once the training training was over, the error rate has been computed for both the experimental and theoretical cases.

The results of the experiments realised for $n=2$ are gathered in Figure \ref{fig:steps_2}. The same experiments have also been realised for $n=3$ and the results are represented in Figure \ref{fig:steps_3}.

\subsection{Analysis}
From figures \ref{fig:steps_2} and \ref{fig:steps_3} we observe that for most functions, the experimental number of steps is higher than the theoretical one. This can be explained by the errors and approximations arising from the way we perform the operation $qt$. These errors and approximations are then corrected in later steps that add to the total number of steps. We also notice some functions for which the number of experimental steps is lower than the theoretical one. In the case of these functions, some of the inputs that might be erroneous have a low amplitude and thus might not be picked by the $qt$ operation. Interestingly enough, we also notice than in the case $n=2$ and $n=3$ the number of theoretical steps is no larger than 2 which is lower than predicted. To further investigate this result, we have performed theoretical training for functions of inputs of length up to $n=10$ and observed that the training also stopped after at most 2 updates.

As can be expected, the theoretical error rate is always zero. While the experimental error rate is not null it seems to be decreasing with $n$. So even when the training does not end with a perfectly tuned network, the resulting network is still a good approximation of the target function.

\section{Conclusion}
In this paper we have presented a method that allows to compute any Boolean function by building a corresponding quantum circuit simply made out of multi-controlled NOT gates. We have also provided a formal proof of the existence as well as the uniqueness of this correspondence. From this followed our approach to quantum neural network wherein the multi-controlled gates are the elementary components performing limited computations. Still by building a circuit, or network, made out of these gates we are able to compute any Boolean functions. Furthermore, the absence of measurement in this architecture means that it is possible to work with a superposition of inputs. This ability is leveraged by the learning algorithm we have designed. When learning with a certain superposition of all the possible inputs of length $n$, the training terminates in at most $n+1$ updates of the circuit.

Nevertheless, because of the superposition, we need to perform an operation similar to quantum state tomography at each updating step. This process requires a significant number of copies of the state outputted by the quantum neural network and thus represents a bottleneck in our approach. Though the implementation of our approach also shows that it is not optimal, it also shows that the upper bound on the number of updates is lower than expected. One way to cope with the limitation of our approach could be to restrain the states present in the superposition to the ones corresponding to inputs with a certain Hamming weight. By reducing the number of states within the superposition, the margin of error is increases and thus the number of copies needed for the tomography decreases. More specifically, when performing the $k$-th update, one could consider a superposition made of the states corresponding to inputs with Hamming weight equal to $k-1$.

Thanks to its ability to handle superpositions, this architecture could be used in a Probably Approximately Correct (PAC) learning framework \cite{Valiant1984}. In the quantum version, we are provided with a superposition of the form $\ket{\phi} = \sum_{x \in \B^n}{\sqrt{D(x)}\ket{x}\ket{c(x)}}$ where $c \colon \B^n \to \B$ is the target function or concept, belonging to a class $C$ and $D$ is an unknown distribution over $\B^n$. Let $h$ be the function computed by the neural network $\mathbf{TNN}$, then the error is defined by $\text{err}_D(c,h)=\text{P}_{x \sim D}\left(c(x)\neq h(x)\right)$. Now suppose we feed $\ket{\phi}$ to $\mathbf{TNN}$, then $\ket{\phi'}=\mathbf{TNN}\ket{\phi}$ and by noting $p_1$ the probability of measuring the read-out qubit of $\ket{\phi'}$ in the state $\ket{1}$, we get $p_1 = \text{err}_D(c,h)$. An algorithm is said to $(\epsilon,\delta)$-PAC learn $C$ when $\text{P}\left(\text{err}_D(c,h) \leq \epsilon \right) \geq 1-\delta$ for all concept $c \in C$ and distribution $D$ \cite{Arunachalam2017}. The interest then lies in finding conditions on $C$ and designing an algorithm able to $(\epsilon,\delta)$-PAC learn $C$ while requiring a number of copies of $\ket{\phi}$ as low as possible.

\nocite{*}
\bibliographystyle{unsrt}
\bibliography{references}
\appendix

\section{Proofs}\label{ap:proof}
\begin{proof}[Proof of Theorem \ref{th:A}]
    The commutativity of $\oplus$ and the fact that for $u \in \B^n$, $m_u=m_u^{-1}$ yield 
    $$
        \mathcal{A} = \left\{\bigoplus_{u \in \B^n}{c_u m_u} \ | \ \forall u \in \B^n, \ c_u \in \{0,1\}\right\}
    $$
    Let $\{c_u\}_{u \in \B^n}$ and $\{d_u\}_{u \in \B^n}$ such that $\bigoplus_{u \in \B^n}{c_u m_u}=\bigoplus_{u \in \B^n}{d_u m_u}$, then 
    $$
        \bigoplus_{u \in \B^n}{(c_u \oplus d_u) m_u} = \mathbf{0}
    $$
    Suppose now that there exist $u \in \B^n$ such that $c_u \neq d_u$ and denote $D=\{u \in \B^n \ | \ c_u \neq d_u\}$, then: 
    $$
        \bigoplus_{u \in \B^n}{(c_u \oplus d_u) m_u} = \bigoplus_{u \in D}{m_u} =\mathbf{0}
    $$ 
    Take $u_0 \in D$ of minimal Hamming weight, that is such that $|1_{u_0}|$ is the smallest among $D$, then 
    $$
        \bigoplus_{u \in D}{m_u}(u_0) = m_{u_0}(u_0) = 1 \neq 0
    $$
    So $D = \varnothing$ which leads to $|\mathcal{A}|=2^{2^n}$, hence $\mathcal{A} = \B^{\B^n}$.
\end{proof}

\begin{proof}[Proof of Lemma \ref{lem:C_u}]
    Let $(x_0, \ldots, x_{n-1},q_r) \in \B^{n+1}$. $\mathbf{C}_u$ being the gate controlled by the qubits $\{\ket{x_i} \ | \ i \in 1_u\}$, it will only swap the state of the read-out qubit when all the controlling qubits are in the state $\ket{1}$, meaning:
    $$
        \mathbf{C}_u\ket{x_0, \ldots, x_{n-1}}\ket{q_r} = \ket{x_0, \ldots, x_{n-1}}\ket{q_r \oplus \prod_{i \in 1_u}{x_i}} = \ket{x_0, \ldots, x_{n-1}}\ket{q_r \oplus m_u(x_0, \ldots, x_{n-1})}
    $$
\end{proof}

\begin{proof}[Proof of Theorem \ref{th:A->A_Q}]
    Let $f,g \in \mathcal{A}$ such that $\Phi(f) = \Phi(g)$, then from Lemma \ref{lem:phi(f)}:
    $$
        \forall (x_0, \ldots, x_{n-1}) \in \B^n, \ \ket{x_0, \ldots, x_{n-1}}\ket{f(x_0, \ldots, x_{n-1})} = \ket{x_0, \ldots, x_{n-1}}\ket{g(x_0, \ldots, x_{n-1})}
    $$
    That is $f = g$.
    \\
    Now let $\mathbf{G} \in \mathcal{A}_Q$, then from Lemma \ref{lem:A_Q}
    $$
        \mathbf{G} = \prod_{u \in \B^n}{\mathbf{C}_{u}^{\alpha_u^G}}
    $$
    With $\alpha_u^G \in \{0,1\}$ for $u \in \B^n$. Now let $f = \bigoplus_{u \in \B^n}{\alpha_u^G m_u}$, then $f \in \mathcal{A}$ and we have 
    $$
        \Phi(f) = \mathbf{G}
    $$
So $\Phi$ is an isomorphism from $\mathcal{A}$ to $\mathcal{A}_Q$.
\end{proof}

\begin{proof}[Proof of Lemma \ref{lem:k_update}]
    We show this result by induction over $k$. As previously, $\mathbf{TNN}^{(k)} = \prod_{u \in \B^n}\mathbf{G}_u^{(k)}$ is the circuit obtained after the $k$-th update.
    \\
    \underline{$k = 1$}
    \\
    According to Lemma \ref{lem:T_x}, we have $\mathbf{TNN}^{(0)}\ket{0,\ldots,0}\ket{0} = \mathbf{G}_{0\ldots0}^{(0)}\ket{0,\ldots,0}\ket{0}$. 
    \\
    We now apply the update rule: if $(0,\ldots,0) \in E^{(0)}$ then we switch the value of $\mathbf{G}_{0\ldots0}^{(0)}$, else, we keep it the same. Let us denote $\mathbf{G}_{0\ldots0}$ the resulting gate. 
    \\
    Following the first update we have $\mathbf{TNN}^{(1)}=\left(\prod_{u \neq (0,\ldots,0)}{\mathbf{G}_u^{(1)}}\right)\mathbf{G}_{0\ldots0}$ and:
    \begin{align*}
        \mathbf{TNN}^{(1)}\ket{0,\ldots,0}\ket{0} &= \mathbf{G}_{0\ldots0}\ket{0,\ldots,0}\ket{0} \\
        &= \ket{0,\ldots,0}\ket{f(0,\ldots,0)}
    \end{align*}
    This means that $(0,\ldots,0) \notin E^{(1)}$ and by the update rule: $\mathbf{G}_{0\ldots0}^{(2)} = \mathbf{G}_{0\ldots0}$. An induction then leads to 
    $$
        \forall s \geq 1, \ \mathbf{G}_{0\ldots0}^{(s)} = \mathbf{G}_{0\ldots0}
    $$
    This proves the case $k=1$.
    \\
    Let $\mathbf{TNN}^{(k)} = \prod_{u \in \B^n}\mathbf{G}_u^{(k)}$ be the circuit resulting from the $k$-th update. According to the induction hypothesis we have:
    $$
        \mathbf{TNN}^{(k)} = \prod_{w_H(u) \geq k}{\mathbf{G}_u^{(k)}}\prod_{w_H(u) < k}{\mathbf{G}_u}
    $$
    Where $\mathbf{G}_u$ is the final value of the gate controlled by $u$.
    \\
    Let $x \in \B^n$ such that $w_H(x) = k$, Lemma \ref{lem:T_x} yields:
    \begin{align*}
        \mathbf{TNN}^{(k)}\ket{x}\ket{0} &= \mathbf{G}_x^{(k)}\prod_{w_H(u) < k}{\mathbf{G}_u}\ket{x}\ket{0}\\
        &= \mathbf{G}_x^{(k)}\ket{x}\ket{q}
    \end{align*}
    We apply the update rule: if $\mathbf{G}_x^{(k)}\ket{x}\ket{q} = \ket{x}\ket{1 \oplus f(x)}$ we switch the value of $\mathbf{G}_x^{(k)}$, else, we keep it the same. Either way, we denote $\mathbf{G}_x$ the value resulting from the $k+1$-th update. We thus have:
    \begin{align*}
        \mathbf{TNN}^{(k+1)}\ket{x}\ket{0} &= \mathbf{G}_x\prod_{w_H(u) < k}{\mathbf{G}_u}\ket{x}\ket{0}\\
        &= \mathbf{G}_x^{(k)}\ket{x}\ket{q}\\
        &= \ket{x}\ket{f(x)}
    \end{align*}
    The update rule then states that this gate will not change value during the $k+2$-th update and an induction shows that it will not change anymore.
    \\
    This being true for all $x \in \B^n$ such that $w_H(x)=k$, we have shown the induction hypothesis for $k+1$.
\end{proof}

\section{Figures}
\begin{figure}[h]
\centering
    \includegraphics[scale=0.5]{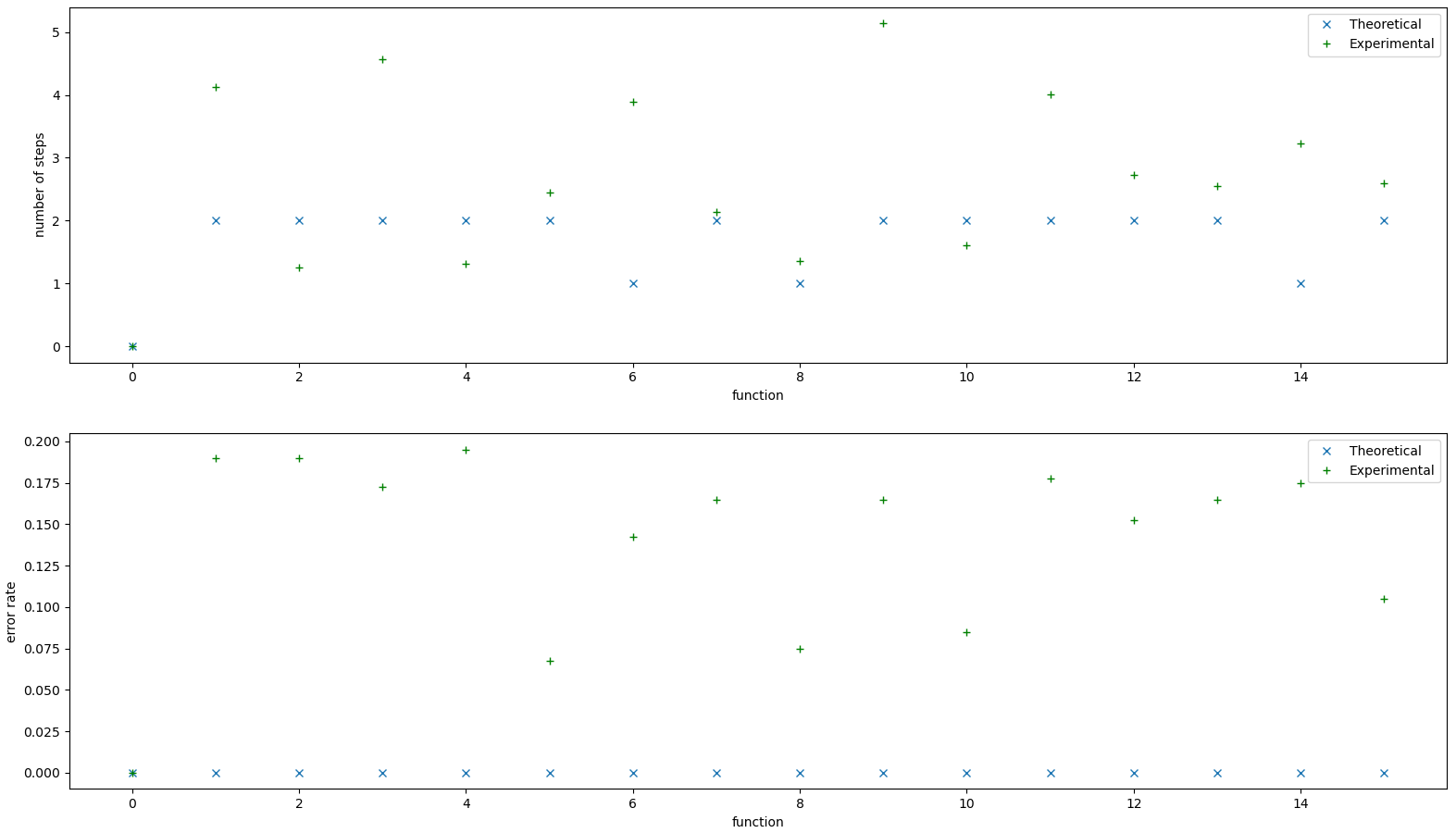}
    \caption{Number of update steps during the training phase for $n=2$}
    \label{fig:steps_2}
\end{figure}

\begin{figure}
    \centering
    \includegraphics[scale=0.5]{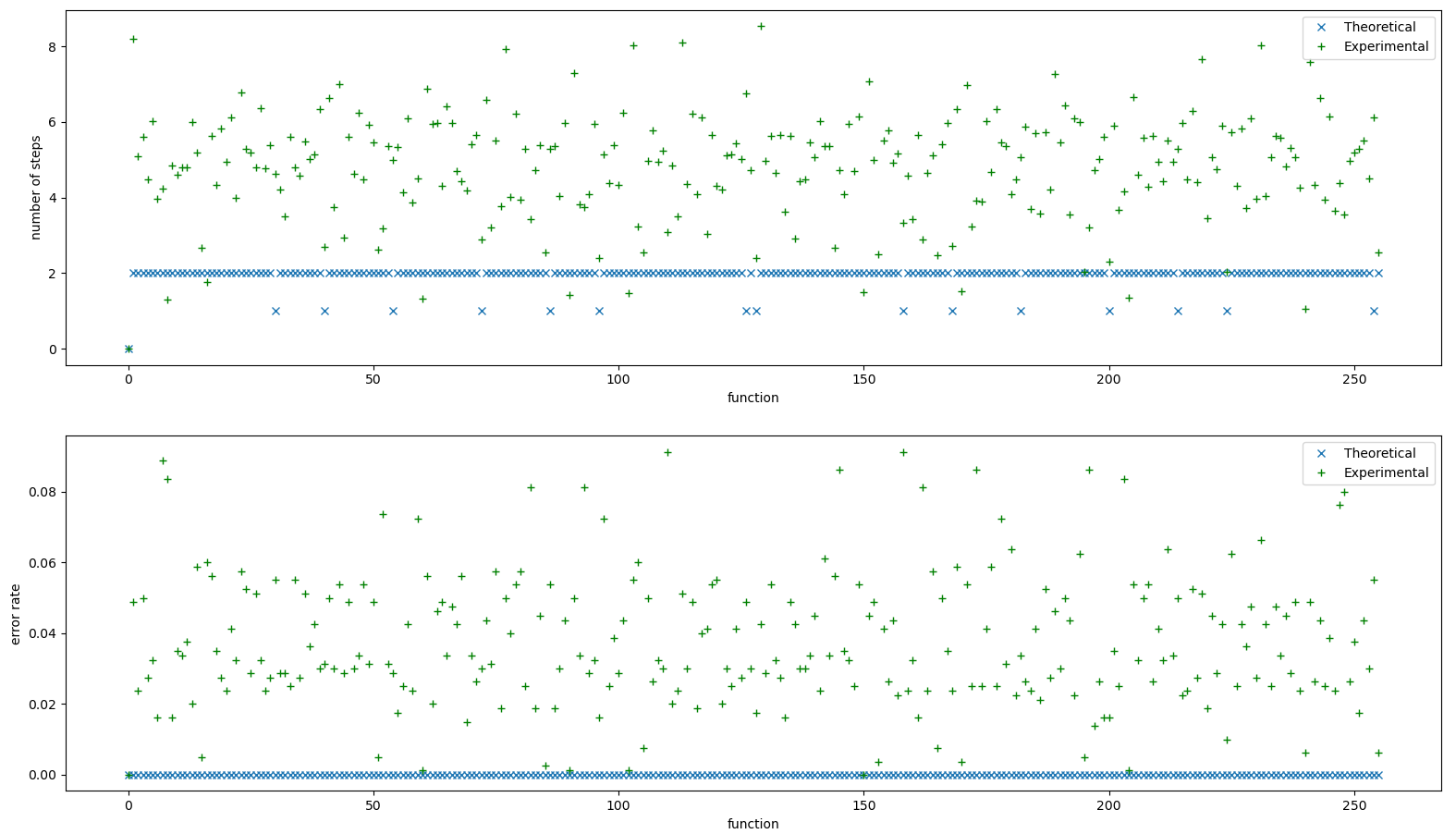}
    \caption{Number of update steps during the training phase for $n=3$}
    \label{fig:steps_3}
\end{figure}
\end{document}